\newcommand{\R}{{\cal R}}
\newcommand{\B}{{\cal B}}
\newcommand{\prefixsumarray}{{\cal A}}
\newcommand{\AC}{{\sf AC}}
\newcommand{\NP}{{\sf NP}}
\newcommand{\fsram}{{\sc fs-ram}}  
\newcommand{\fsrasmall}{fs-ram}
\newcommand{\fsrambig}{FS-RAM}
\newcommand{\ram}{{\sc ram}}
\newcommand{\wordRAM}{word-\ram}
\newcommand{\uwram}{{\sc uw-ram}}
\newcommand{\comp}{{compress}}
\newcommand{\sprd}{{spread}}
\newcommand{\alu}{{\sc alu}\xspace}
\newcommand{\bmh}{{\sc bmh}\xspace}
\renewcommand\bibsection%
\begin{document}

\title{Algorithms in the Ultra-Wide Word Model}
\titlerunning{Algorithms in the Ultra-Wide Word Model}  
\author{Arash Farzan\inst{1} \and Alejandro L\'{o}pez-Ortiz\inst{2} \and Patrick K. Nicholson\inst{3} \and Alejandro Salinger\inst{4}
}
\authorrunning{Farzan et al.}
\tocauthor{Arash Farzan,Alejandro L\'{o}pez-Ortiz,Patrick K. Nicholson,Alejandro Salinger}

\institute{Facebook Inc.\\
\email{afarzan@fb.com}
\and 
David R. Cheriton School of Computer Science, University of Waterloo\\
\email{alopez-o@uwaterloo.ca}
\and
Max-Planck-Institut f\"{u}r Informatik\\
\email{pnichols@mpi-inf.mpg.de}
\and 
Department of Computer Science, Saarland University \\
\email{salinger@cs.uni-saarland.de}
}
\maketitle      
\begin{abstract}
The effective use of parallel computing resources to speed up
algorithms in current multi-core parallel architectures
remains a difficult challenge, with ease of programming playing a key
role in the eventual success of various parallel architectures.
In this paper we consider an alternative view of parallelism in the
form of an ultra-wide word processor.
We introduce the Ultra-Wide Word architecture and model, an extension
of the word-\ram~model that allows for constant time operations on
thousands of bits in parallel.
Word parallelism as exploited by the word-\ram~model does not suffer
from the more difficult aspects of parallel programming, namely
synchronization and concurrency.
%
For the standard word-\ram~algorithms, the speedups obtained are
moderate, as they are limited by the word size.
We argue that a large class of word-\ram~algorithms can be implemented
in the Ultra-Wide Word model, obtaining speedups comparable to
multi-threaded computations while keeping the simplicity of
programming of the sequential \ram~model.
We show that this is the case by describing implementations of
Ultra-Wide Word algorithms for dynamic programming and string
searching.
In addition, we show that the Ultra-Wide Word model can be used to
implement a non-standard memory architecture, which enables the
sidestepping of lower bounds of important data structure problems such
as priority queues and dynamic prefix sums.
While similar ideas about operating on large words have been mentioned before in the context of multimedia processors~\cite{Thorup03}, it is only recently that an  architecture like the one we propose has become feasible and that details can be worked out.
\end{abstract}

%
%
\section{Introduction}
In the last few years, multi-core architectures have become the
dominant commercial hardware platform. The potential of these architectures to
improve performance through parallelism remains to be fully attained, as
effectively using all cores on a single application has proven to be a
difficult challenge. In this paper we introduce the Ultra-Wide Word architecture and model of computation, an alternate view of parallelism for
a modern architecture in the form of an ultra-wide word
processor. This can be implemented by replacing one or more cores of a multi-core chip with
a very wide word Arithmetic Logic Unit (\alu) that can perform
operations on a very large number of bits in parallel.

The idea of executing operations on a large number of bits
simultaneously has been successfully exploited in different forms. In
Very Long Instruction Word (VLIW) architectures~\cite{Fisher:1983:VLI:1067651.801649}, several instructions can be
encoded in one wide word and executed in one single parallel
instruction. Vector processors allow the execution of one instruction on
multiple elements simultaneously, implementing
Single-Instruction-Multiple-Data (SIMD) parallelism. This form of
parallelism led to the design of supercomputers such as the Cray
architecture family~\cite{cray1} and is now present in Graphics
Processing Units (GPUs) as well as in Streaming SIMD Extensions (SSE)
 to scalar processors.

In 2003, Thorup~\cite{Thorup03} observed that certain instructions present in some SSE implementations were particularly useful for operating on large integers and speeding up algorithms for combinatorial problems. To a certain extent, some of the ideas in the Ultra Wide Word architecture are presaged in the paper by Thorup, which was
proposed in the context of multimedia processors. Our architecture developed
independently and differs on several aspects 
\begin{SHORT}
(see discussion in full version~\cite{uwram_full})
\end{SHORT}
\begin{FULL}
(see discussion in Section~\ref{sec:other_models})
\end{FULL} 
but it is motivated by similar considerations.

As CPU hardware advances, so does the model used in theory to analyze
it. The increase in word size was reflected in the word-\ram~model in
which algorithm performance is given as a function of the input size
$n$ and the word size $w$, with the common assumption that
$w=\Theta(\log n)$. In its simplest version, the word-\ram~model allows
the same operations as the traditional \ram~model.  
Algorithms in this
model take advantage of bit-level parallelism through packing various
elements in one word and operating on them simultaneously. Although
similar to vector processing, the word-\ram~provides more flexibility
in that the layout of data in a word depends on the algorithm and
data elements can be packed in an arbitrary way. Unlike VLIW
architectures, the Ultra-Wide Word model we propose is not concerned
with the compiler identifying operations which can be done in parallel
but rather with achieving large speedups in implementations of
word-\ram~algorithms through operations on thousands of bits in
parallel.\looseness=-1

As multi-core chip designs evolve, chip vendors try to determine the
best way to use the available area on the chip, and the options
traditionally are an increased number of cores or larger caches.  We believe that the
current stage in processor design allows for the inclusion of an
architecture such as the one we propose. In addition,
ease of programming is a major hurdle to the eventual success of
parallel and multi-core architectures. In contrast, bit parallelism as
exploited by the word-\ram~model does not suffer from this drawback:
there is a large selection of word-\ram~algorithms (see,
e.g., \cite{andersson07,Han:2004:DSO:975978.975984,hagerup98,C06}) that readily
benefit from bit parallelism without having to deal with the more
difficult aspects of concurrency such as mutual exclusion,
synchronization, and resource contention. In this sense, the advantage
of an on-chip ultra-wide word architecture is that it can enable
word-\ram~algorithms to achieve speedups comparable to those of multi-threaded
computations, while at the same time keeping the simplicity of
sequential programming that is inherent to the \ram~model. We argue
that this is the case by showing several examples of implementations
of word-\ram~algorithms using the wide word, usually with simple
modifications to existing algorithms, and extending the ideas and
techniques from the word-\ram~model.\looseness=-1


In terms of the actual architecture, we envision the ultra-wide \alu
together with multi-cores on the same chip. Thus, the Ultra-Wide Word
architecture adds to the computing power of current architectures. The
results we present in this paper, however, do not use multi-core parallelism.

\begin{SHORT}
\paragraph{\textbf{Summary of Results}}
We introduce the Ultra-Wide Word architecture and
\end{SHORT}
\begin{FULL}
\subsubsection{Summary of Results}
We introduce the Ultra-Wide Word architecture and
\end{FULL}
model, which extends the $w$-bit word-\ram\ model by adding an \alu that
operates on $w^2$-bit words.  We show that several broad classes of
algorithms can be implemented in this model.  In particular:

\begin{itemize}

\item We describe Ultra-Wide Word implementations of dynamic
  programming algorithms for the subset sum problem, the knapsack
  problem, the longest common subsequence problem, as well as many
  generalizations of these problems.  Each of these algorithms
  illustrates a different technique (or combination of techniques) for
  translating an implementation of an algorithm in the word-\ram~model
  to the Ultra-Wide Word model.  In all these cases we obtain a
  $w$-fold speedup over word-\ram~algorithms.

\item We also describe Ultra-Wide Word implementations of popular
  string searching algorithms: the Shift-And/Shift-Or
  algorithms~\cite{by92,wu92} and the Boyer-Moore-Horspool
  algorithm~\cite{horspool80}. Again, we obtain a $w$-fold speedup over the original
  algorithms.

\item Finally, we show that the Ultra-Wide Word model is powerful
  enough to simulate a non-standard memory architecture in which bytes can overlap, which we shall call \fsram~\cite{fredman89}. This allows us to
  implement data structures and algorithms that circumvent known lower
  bounds for the word-\ram~model.
  
\end{itemize}

\begin{SHORT}
Due to space constraints, we only present a high-level description of our results. The full details can be found in the full version of this paper~\cite{uwram_full}.\looseness=-1
\end{SHORT}

\begin{FULL}
The rest of this paper is organized as follows. In
Section~\ref{sec:uwRAM} we describe the Ultra-Wide architecture and
model of computation. We show in Section~\ref{sec:rambo} how to
simulate the \fsram~memory
architecture. In Sections~\ref{sec:dp} and~\ref{sec:string_search} we
present \uwram~implementations of algorithms for dynamic
programming and string searching. 
We present concluding remarks in Section~\ref{sec:conclusions}.
\end{FULL}

\section{The Ultra-Wide Word-RAM Model}
\label{sec:uwRAM}

The Ultra-Wide word-\ram~model (\uwram) we propose is an extension of the word-\ram~model. 
\begin{FULL} 
We briefly review here the key features of the word-\ram.

\subsection{Algorithms in the word-RAM model}
\label{sec:wordRAM}
\end{FULL}
The word-\ram~is a variant of the \ram~model in which a word has length $w$ bits, and the contents of memory are integers in the range $\{0,\ldots,2^{w}-1\}$~\cite{hagerup98}. This implies that $w\ge \log n$, where $n$ is the size of the input, and a common assumption is $w=\Theta(\log n)$ (see, e.g.,~\cite{munro96,Bose09}). 
\begin{SHORT}
Algorithms in this model take advantage of the intrinsic parallelism of operations
on $w$-bit words. We provide a more detailed description of the word-\ram~in the full version~\cite{uwram_full}.\looseness=-1
\end{SHORT}
\begin{FULL}
The word-\ram~includes the usual load, store, and jump instructions of the
\ram~model, allowing for immediate operands and for direct and indirect
addressing. In this model, arithmetic operations on two words are
modulo $2^w$, and the instruction set includes left and right shift
operations (equal to multiplication and division by powers of two) and
boolean operations. All instructions take constant time to
execute. There are different versions of the word-\ram~model depending
on the instruction set assumed to be available. The \emph{restricted
  model} is limited to addition, subtraction, left and right shifts,
and boolean operations AND, OR, and NOT. These instructions augmented
with multiplication constitute the \emph{multiplication
  model}. Finally, the $\AC^0$ model assumes that all functions
computable by an unbounded fan-in circuit of polynomial size (in $w$)
and constant depth are available in the instruction set and execute in
constant time. This definition includes all instructions from the
restricted model and excludes multiplication. We refer to the reader
to the survey by \citet{hagerup98} for a more extended
description of the model and a discussion of its practicality.

Word-\ram~algorithms
exploit word-level parallelism by operating on various elements simultaneously
using instructions on $w$-bits words. There are various algorithms for fundamental problems that take
advantage of word-level parallelism or a bounded universe, some of
which fit into the word-\ram~model, although are not explicitly
designed for it~\cite{fourRussians70}. Much attention has been given
to sorting and searching, for which known lower bounds in the
comparison model do not carry to the word-\ram~model~\cite{FW93}. For
example, in a word-\ram~model with multiplication, sorting $n$ words
can be done in $O(n\log\log n)$ time and $O(n)$ space
deterministically~\cite{Han:2004:DSO:975978.975984}, and in expected
$O(n\sqrt{\log\log n})$ time and $O(n)$ space using
randomization~\cite{han02}.  Word-\ram~techniques have also been
applied in many different areas, such as succinct data
structures~\cite{J89,munro96}, computational geometry~\cite{C06,CP09},
and text indexing~\cite{GGV03}.

\subsection{Ultra-Wide RAM}
\end{FULL}

The Ultra-Wide word-\ram~model 
\begin{FULL}
(\uwram)
\end{FULL} 
extends the word-\ram~model by
introducing an ultra-wide \alu with $w^2$-bit \emph{wide words}\begin{FULL}, where
$w$ is the number of bits in a word-\ram\end{FULL}. The ultra-wide \alu supports
the basic operations available in a word-\ram~on the entire word at once. 
 As in the word-\ram~model, the available set of instructions can be assumed to be those of the restricted, multiplication, or the $\AC^0$ models. For the results in this paper we assume the instructions of the restricted model (addition, subtraction, left and right shift, and bitwise boolean
operations), plus two non-standard straightforward $\AC^0$
operations that we describe at the end of this \begin{FULL}sub\end{FULL}section.

The model maintains the standard $w$-bit \alu as well as
$w$-bit memory addressing. In general, we use the parameter $w$ for the word size in the description and analysis of algorithms, although in some cases we explicitly assume $w=\Theta(\log n)$. In terms of real world parameters, the wide
word in the ultra-wide \alu would presently have between 1,000 and
10,000 bits and could increase even further in the future. 
In reality, the addition of an \alu that supports operations on thousands on bits would require appropriate adjustments to the data and instruction caches of a processor as well as to the instruction pipeline implementation. Similarly to the abstractions made by the \ram~and word-\ram~models, the \uwram~model ignores the effects of these and other architectural features and assumes that the execution of instructions on ultra-wide words is as efficient as the execution of operations on regular $w$-bit words, up to constant factors.

Provided that the \uwram~supports the same operations as the word-\ram,
the techniques to achieve bit-level parallelism in the word-\ram~extend
directly to the \uwram. However, since the word-\ram~assumes that a
word can be read from memory in constant time, many operations in
word-\ram~algorithms can be implemented 
 through constant time table lookups. 
\begin{FULL} 
For
example, counting the number of set bits in a word of $w=\log n$ bits
can be implemented through two table lookups to a precomputed table
that stores the number of set bits for each number of $\log n/2$
bits. The space used by the table is $\sqrt{n}$ words.
We cannot expect to achieve the same constant time lookup operation with words
of $w^2$ bits since the size of the lookup tables would be
prohibitive. \end{FULL}
\begin{SHORT}
With words of $w^2$ bits, we cannot expect to achieve constant time lookups since the size of the tables would be
prohibitive.
\end{SHORT}
However, the memory access operations of our model allow for the implementation of simultaneous table lookups of several $w$-bit words within a wide word, as we shall explain below.

We first introduce some notation. Let $W$ denote a $w^2$-bit word. Let
$W[i]$ denote the $i$-th bit of $W$, and let $W[i..j]$ denote the
contiguous subword of $W$ from bit $i$ to bit $j$, inclusive. The
least significant bit of $W$ is $W[0]$, and thus
$W=\sum_{i=0}^{w^2-1} W[i]\times 2^i$. 
For the sake of memory access operations, 
we divide $W$ into $w$-bit blocks. Let $W_j$ denote the $j$-th contiguous block
of $w$ bits in $W$, for $0\le j \le w-1$, and let $W_j[i]$ denote the
$i$-th bit within $W_j$. Thus, $W_j=W[jw..(j+1)w-1]$\begin{FULL} and $W=\sum_{j=0}^{w-1}2^{jw}\times(\sum_{i=0}^{w-1} W_j[i]\times2^i)$\end{FULL}. 
The division of a wide word in blocks is solely intended for certain
memory access operations, but basic operations of the model have no notion of
block boundaries. Fig.~\ref{fig:word} shows a representation of a
wide word, depicting bits with increasing significance from left
to right. In the description of operations
with wide words we generally refer to variables with uppercase
letters, whereas we use lowercase to refer to regular variables that use one $w$-bit word. Thus, shifts to the left (right) by $i$ are equivalent to
division (multiplication) by $2^i$.  In addition, we use $\vec{0}$ to denote a wide word with value
0. We use standard C-like notation for operations {\sc and} (`$\&$'), {\sc or}
(`$|$'), {\sc not} (`$\sim$') and shifts (`$<<$',`$>>$').\looseness=-1


 \begin{figure*}[!t]
\begin{center}
\begin{FULL}
\includegraphics[scale=0.60]{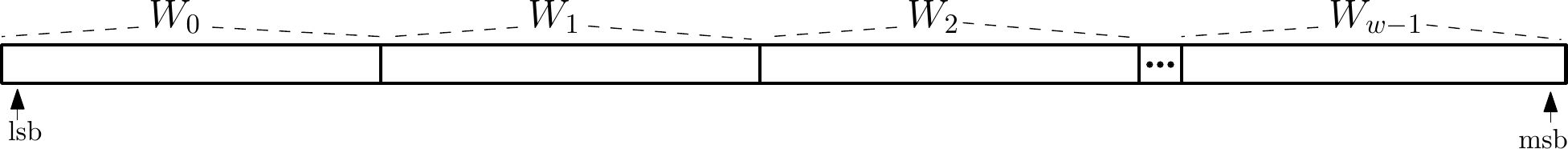}
\end{FULL}
\begin{SHORT}
\includegraphics[scale=0.50]{figs/wordram.pdf}
\end{SHORT}
\caption[A wide word in the Ultra-Wide Word architecture]{A wide word in the Ultra-Wide Word architecture. The wide
  word is divided in $w$ blocks of $w$ bits each, shown here in
  increasing number of block from left to right.}
\label{fig:word}
\end{center}
\end{figure*}

\begin{SHORT}
\paragraph{\textbf{Memory access operations}}  
In this architecture $w$ (not necessarily contiguous) words from
\end{SHORT}
\begin{FULL}
\subsubsection{Memory Access Operations}
In this architecture $w$ (not necessarily contiguous) words from
\end{FULL}
memory can be transferred into the $w$ blocks of a wide word $W$ in constant
time. These blocks can be written to memory in parallel as
well. 
As with PRAM algorithms, the memory access type of the model can be assumed to allow or disallow concurrent reads and writes.  For the results in this paper we assume the Concurrent-Read-Exclusive-Write (CREW) model.

The memory access operations that involve wide words are of three types:
block, word, and content. We describe read accesses (write accesses are analogous). A \emph{block access} loads a single $w$-bit word from memory into a given block of a wide word. A \emph{word access} loads $w$ contiguous $w$-bit words from memory into an entire wide word in constant time. Finally, a \emph{content access} uses the contents of a wide word $W$ as addresses to load (possibly non-contiguous) words of memory simultaneously: for each block $j$ within $W$,  this operation loads from memory the $w$-bit word whose address is $W_j$ (plus possibly a base address).  The specifics of read and write operations are shown in Table~\ref{tab:memory_access}. 

Note that
accessing several (possibly non-contiguous) words from memory
simultaneously is an assumption that is already made by any shared
memory multiprocessing model. While, in reality, simultaneous access
to all addresses in actual physical memory (e.g., DRAM) might not be
possible, in shared memory systems, such as multi-core processors, the
slowdown is mitigated by truly parallel access to private and shared
caches, and thus the assumption is reasonable. We therefore follow
this assumption in the same spirit. 
In fact, for $w$ equal to the regular word size (32 or 64 bits), the choice of $w$ blocks of $w$ bits each for the wide word \alu was judiciously made to provide the model with a feasible memory access implementation. $w^2$ lines to memory are well within the realm of the possible, as they are of the same order of magnitude (a factor of 2 or 8) as modern GPUs, some of which feature bus widths of 512 bits 
\begin{SHORT}(see, e.g.,~\cite{AMDFirePro,NvidiaGTX285}).\end{SHORT}
\begin{FULL}(e.g., FirePro W9100~\cite{AMDFirePro} or Nvidia GeForce GTX 285~\cite{NvidiaGTX285}, see also~\cite{AMD_GPUs,Nvidia_GPUs}).\end{FULL}
We note that a more general model could feature a wide word with $k$ blocks of $w$ bits each, where $k$ is a parameter, which can be adjusted in reality according to the feasibility of implementation of parallel memory accesses. Although described for $w$ blocks, the algorithms presented in this paper can easily be adapted to work with $k$ blocks instead. 
Naturally, the speedups obtained would depend on the number of blocks assumed, but also on the memory bandwidth of the architecture. A practical implementation with a large number of blocks would likely suffer slowdowns due to congestion in the memory bus. We believe that an implementation with $k$ equal to 32 or 64 can be realized with truly parallel memory access, leading to significant speedups.

\begin{table*}[!t]
\begin{center}
\begin{tabular}{| l | l | l |}\hline
\ \textbf{Name}\  & \textbf{Input} &\   \textbf{Semantics}\\\hline\hline
\ read\_block\  &\  $W$, $j$, base\ &\  $W_j \leftarrow $MEM[base+$j$]\  \\\hline 
\ read\_word\  &\  $W$, base\  &\  for all $j$ in parallel: $W_j \leftarrow $MEM[base+j]\ \\\hline
\ read\_content\  &\ $W$, base \  &\  for all $j$ in parallel: $W_j \leftarrow $MEM[base+$W_j$]\ \\\hline
\ write\_block\  &\  $W$, $j$, base\  &\  MEM[base+$j$]$\leftarrow W_j$\ \\\hline
\ write\_word\  &\  $W$, base\  &\  for all $j$ in parallel: MEM[base+$j$]$ \leftarrow W_j$ \ \\\hline
\ write\_content &\ $W$, $V$, base\ &\ for all $j$ in parallel: MEM[base+$V_j$]$ \leftarrow W_j$ \ \\\hline
\end{tabular}
\end{center}
\caption[Wide word memory access operations supported by the
  UW-RAM]{Wide word memory access operations of the
  \uwram. {\sc mem} denotes regular \ram~memory, which is indexed by addresses to
words
, and \emph{base} is some base address.} 
\label{tab:memory_access}
\end{table*}

\begin{SHORT}
\paragraph{\textbf{UW-RAM Subroutines}}
A procedure called \emph{\comp} serves to bring together bits from all blocks
\end{SHORT}
\begin{FULL}
\subsubsection{UW-RAM Subroutines}
\label{sec:utilities}

We now describe some operations that will be used throughout the
\uwram~implementations that we describe in later sections. 
A procedure called \emph{\comp} serves to bring together bits from all blocks
\end{FULL}
into one block in constant time, while a procedure called
\emph{\sprd} is the inverse function\footnote{These operations are also known as PackSignBits and UnPackSignBits~\cite{Thorup03}.}.
Both operations can be implemented by straightforward constant-depth circuits. 
\begin{OLD}Alternatively, if multiplication on the wide word is available, these operations can be implemented by adapting the compression and spreading operations described by Brodnik~\cite[Chapter
  4.]{brodnik95}.
\end{OLD}	
We will also use parallel comparators, a standard technique used in word-\ram~algorithms~\cite{hagerup98} (see details in \begin{FULL}Appendix~\ref{app:subroutines}\end{FULL}\begin{SHORT}full version~\cite{uwram_full}\end{SHORT}).
Although these are all the subroutines that we need for the results in this paper, other operations of similar complexity could be defined if proved useful. 


\begin{figure*}[!t]
\begin{center}
\begin{SHORT}
\includegraphics[scale=0.50]{figs/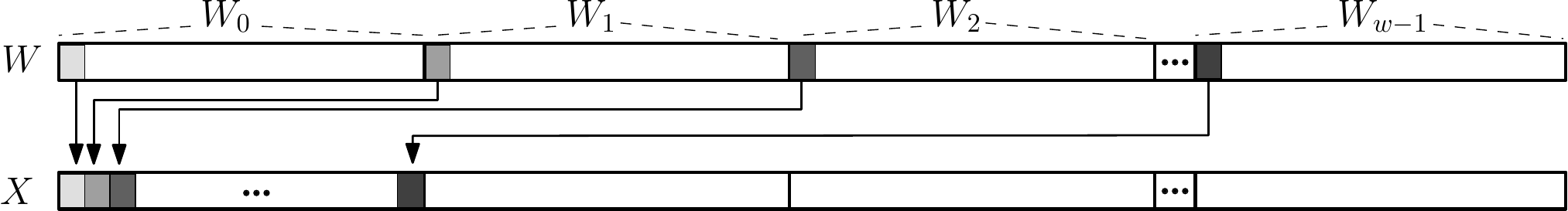}
\end{SHORT}
\begin{FULL}
\includegraphics[scale=0.50]{transpose.pdf}
\end{FULL}
\caption{The \emph{\comp} operation takes a wide word $W$ whose
  set bits are restricted to the first bit of each block and
  compresses them to the first block of a wide word.}
\label{fig:transpose}
\end{center}
\end{figure*}

\begin{itemize}
\item \textbf{Compress}: 
Let $W$ be a wide word in which all bits are zero except possibly for the
first bit of each block. The \comp~operation copies the first bit
of each block of $W$ to the first block of a word $X$. I.e., if
$X=\textrm{\comp}(W)$, then $X[j]\gets W_j[0]$ for $0 \le j < w$, and
$X[j]=0$ for $j\ge w$ (see
Fig.~\ref{fig:transpose}). 
\item \textbf{Spread:} 
This operation is the inverse of the \comp~operation. It takes a
word $W$ whose set bits are all in the first block and spreads them
across blocks of a word $X$ so that $X_j[0]\gets W[j]$ for $0 \le j <
w$. 
\end{itemize}

\begin{SHORT}
\paragraph{\textbf{Relation to Other Models}}
\label{sec:other_models}

We provide a discussion of similarities and differences between the \uwram~and other existing models in the full version~\cite{uwram_full}.
\end{SHORT}

\begin{FULL}
\subsection{Relation to Other Models}
\label{sec:other_models}

There exist various models and architectures that exploit the execution of instructions on a large number of bits simultaneously.  
In Very Large Instruction Word (VLIW) architectures~\cite{Fisher:1983:VLI:1067651.801649} several, possibly different instructions can be encoded in one wide word and executed in parallel. It is usually the compiler's job to determine which instructions of a program can be executed safely in parallel. In contrast, in the \uwram~model it is up to the algorithm designer to specify how parallelism in the ultra wide word should be used. In addition, the wide word can only execute one type of instruction at a time. In this sense, the \uwram~is closer to a vector processor, in which a single instruction is executed on various data item, implementing SIMD parallelism. However, 
while vector processors operate on fields which are independent of each other, the ultra wide \alu in the \uwram~is really one wide word of thousands of bits that treats its contents as one data object. An exception to this are the memory access instructions, which load and store data in blocks within the wide word so that the wide word \alu can interact with regular $w$-bit data. It is of course possible to use the ultra-wide word to implement a vectorized operation, however, as instructions in the \uwram~operate on the entire word, it is up to the algorithm designer to deal with carries and other interference within fields. Moreover, the length of a field in the \uwram~is variable, as it depends on the algorithm's choice. In that sense, the \uwram~is a more flexible model. 

Many modern processors support some form of SIMD parallelism with vectors of a small number of fields (e.g. Intel's SSE). Depending on the architecture, some of the available operations include inter-field instructions such as \emph{shuffle} (which permutes fields in a vector), \emph{pack} and \emph{unpack} (equivalent to our \comp~and \sprd~operations), inter-field shifts, or global sum (which sums all fields in the vector). The power of multimedia processors was studied by Thorup~\cite{Thorup03}, who modeled these processors as vectors of $k$ fields of $\ell$ bits each. Thorup showed that standard global operations on $(k\times \ell)$-bit words can be implemented using vector instructions and inter-field operations in constant time, and argued that this enables the implementation of fundamental combinatorial algorithms such as sorting, hashing, and algorithms for minimum spanning trees on $(k\times\ell)$-bit integers. 

In contrast to Thorup's work, our main interest is in using the ultra wide word to deal with inputs of regular $w$-bit data objects and to speed up algorithms by being able to operate on more of these objects simultaneously. Moreover, we assume that the wide-word \alu supports the standard operations on the full word from the outset, with no need to simulate them using vector operations. Finally, we explore the consequences of indirect memory addressing at the field level, a feature that is not mentioned in Thorup's model.

The \uwram~model can also be related to Multiple-Instruction-Multiple-Data (MIMD) models, and in particular to the PRAM. Although the \uwram~\alu can only execute one instruction on the wide word, it is conceivable to devise a simulation of a PRAM algorithm on the \uwram. Each block of the wide word in the \uwram~acts like a PRAM processor. Since the \uwram~can only execute one type of instruction at a time, each parallel step of the PRAM algorithm is executed in $\lceil s/w \rceil$ steps on the \uwram, where $s$ is the number of different instructions involved in the PRAM algorithm. For a constant number of different PRAM instructions and a non-constant number of \uwram~blocks $w$, this simulation results in a constant overhead in time (compared to the PRAM algorithm running on $\Theta(w)$ processors). However, if such simulation were to be done in any practical implementation of these two models, the actual slowdown would be significant and most instructions would execute serially (as the number of different PRAM instructions is in the same order of magnitude as $w$). On the other hand, any \uwram~algorithm that runs in time $t+q$, where $q$ is the number of \comp~operations and $t$ is the number of steps involved in the rest of the operations, can be simulated in time $O(t+q\log w)$ on a PRAM with $w$ processors, as $\log w$ steps are necessary to simulate a \comp~operation. 

Although simulations between the \uwram~and other models exist, the idea of introducing the \uwram~is to achieve larger speedups with \wordRAM~algorithms, keeping the programming techniques of this model. In practice, the implementations of PRAM algorithms are usually on asynchronous multi-cores, in which programmers must deal with concurrency issues. The advantage of our model is that we can avoid these issues while obtaining similar speedups to those of multi-cores.
\end{FULL}

\section{Simulation of \fsrambig}
\label{sec:rambo}


In the standard \ram~model of computation memory is organized in registers or words, each
word containing a set of bits. Any bit in a word belongs to that
word only. In contrast, in the \fsram~model~\cite{fredman89}---also known as Random Access Machine with Byte Overlap ({\sc rambo})--- words can overlap, that is, a
single bit of memory can belong to several words. 
The topology of the
memory, i.e., a specification of which bits are contained in which
words, defines a particular variant of the \fsram~model.  
Variants of
this model have been used to sidestep lower bounds for important data
structure problems~\cite{brodnik05,brodnik06b}.

\begin{figure}[!t]
\begin{center}
\begin{FULL}
\includegraphics[scale=0.40]{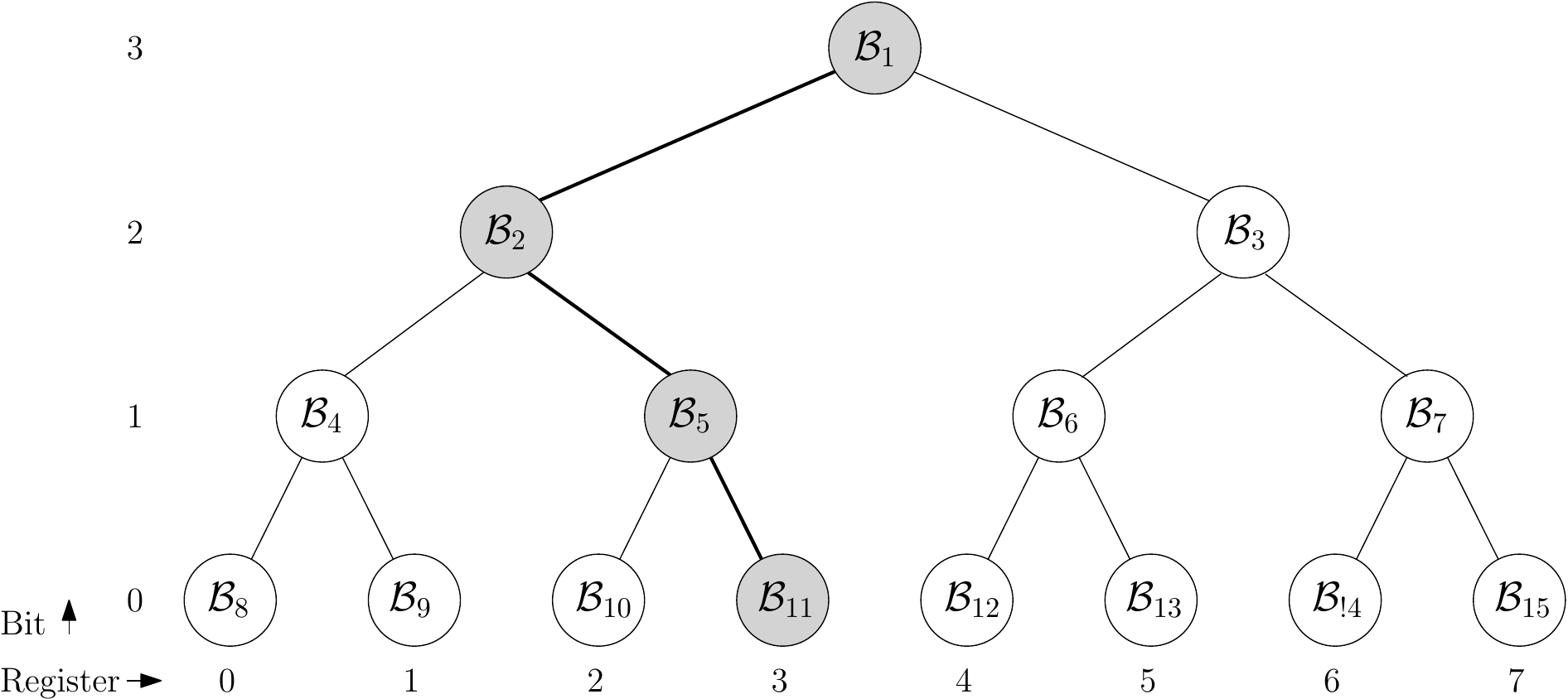}
\end{FULL}
\begin{SHORT}
\includegraphics[scale=0.40]{figs/rambo_tree.pdf}
\end{SHORT}
\caption[Yggdrasil \fsram~memory layout]{Yggdrasil memory layout~\cite{brodnik05}: each node in a complete binary tree
  is an \fsram~bit and registers are defined as paths from a leaf to the
  root. For example, register 3 contains bits $\B_{11},\B_5,\B_2$, and
  $\B_1$ (shaded nodes). 
	}
\label{fig:rambo_tree}
\end{center}
\end{figure}

We show how the \uwram~can be used to implement memory access
operations for any given \fsram~of word size at most $w$ bits in constant
time. Thus, the time bounds of any algorithm in the \fsram~model carry over
directly to the \uwram. Note that each \fsram~layout requires a different specialized hardware implementation, whereas a \uwram~architecture can simulate any \fsram~layout without further changes to its memory architecture.

\begin{FULL}
\subsection{Implementing \fsrambig~Operations in the UW-RAM}
\end{FULL}

Let $\B_1,\ldots,\B_B$ denote the bits of \fsram~memory. A particular \fsram~memory layout can be defined by 
the registers and the bits contained in them~\cite{brodnik95}. For example, in the \emph{Yggdrasil} model
in Fig.~\ref{fig:rambo_tree},
reg[0]=$\B_8\B_4\B_2\B_1$, and in general reg[$i$].bit[$j$]$=\B_k$,
where $k=\lfloor i/2^j\rfloor+2^{m-j-1}$ ($m=4$ in the example)~\cite{brodnik05}.\looseness=-1

In order to implement memory access operations on a given \fsram~using
the \uwram, we need to represent the memory layout of \fsram~in
standard \ram. Assume an~\fsram~memory of $r$ registers of $b\le w$ bits each and $B\le
br$ distinct \fsram~bits. 
We assume that the \fsram~layout is given as a
table $\R$ that stores, for each register and bit within the register,
the number of the corresponding \fsram~bit. Thus, if
reg[$i$].bit[$j$]$=\B_k$, for some $k$, then $\R[i,j]=k$. 
We assume  $\R$ is stored in
row major order.
%
We simply store the value of each \fsram~bit $\B_i$ in a different $w$-bit entry of an array $A$ in~\ram, i.e., $A[i]=\B_i$. 
\begin{FULL}
We could store more than one bit in each word of $A$;
however, this representation allows us to avoid having to serialize concurrent writes to the
same word.
\end{FULL}

\begin{FULL}
Given an index $t$ of a register of an \fsram~represented by $\R$, we
can read the values of each bit of reg[$t$] from \ram~and return the $b$
bits in a word. Doing this sequentially for each bit might take $O(b)$
time. Using the wide word we can take advantage of parallel reading
and the \comp~operation to retrieve the contents of reg[$t$] in
constant time. 
\end{FULL}
\begin{SHORT}
Given an index $t$ of a register of an \fsram~represented by $\R$, we
can read the values of each bit of reg[$t$] from \ram~and return the $b$
bits in a word in constant time using the parallel reading
and \comp~operations.
\end{SHORT}
Let reg[$t$]$=\B_{i_0}\ldots\B_{i_{b-1}}$. The read
operation first obtains the address in $A$ of each bit of register $t$ from $\R$. Then, it uses a content access to read the value of each bit $\B_{i_j}$ into block $W_j$ of
$W$, thus assigning $W_j\gets A[\R[t,j]]$. Finally, it applies one
\comp~operation, after which the $b$ bits are stored in
$W_0$. \begin{FULL} Algorithm~\ref{alg:rambo_read}  
shows the read operation, which takes constant time.\end{FULL}
In order to implement the write operation
reg[$t$]$\gets \B_{i_0}\ldots\B_{i_{b-1}}$ of \fsram, we first set
$W_0\gets \B_{i_0}\ldots\B_{i_{b-1}}$ and perform a \sprd~operation 
 to place each bit $\B_j$ in block $W_j$. We then write the
contents of each $W_j$ in
$A[\R[t,j]]$. \begin{FULL}Algorithm~\ref{alg:rambo_write} 
shows this operation, which takes constant
time as well.\end{FULL}
\begin{SHORT} Both read and write take constant time. We describe these operations in pseudocode in the full version~\cite{uwram_full}.\end{SHORT}

\begin{FULL}
\begin{algorithm}[h]
\caption[\fsram~read]{\fsrasmall\_read($t$)}
\begin{algorithmic}[1]
				\STATE read\_word$(W,\R[t])$ \label{line:read1} \COMMENT{$W_j\gets \R[t,j]$}
				\STATE read\_content$(W,A)$ \label{line:read2} \COMMENT{$W_j\gets A[\R[t,j]]$} \label{lin:rambo_read_2}
				\STATE W $\gets$ \comp$(W)$
				\STATE write\_block$(W,0,\&ret)$ \COMMENT{$ret\gets W_0$}
				\RETURN $ret$
\end{algorithmic}
\label{alg:rambo_read}
\end{algorithm}


\begin{algorithm}[h]
\caption[\fsram~write]{\fsrasmall\_write($t,\B=\B_{i_0}\ldots\B_{i_{b-1}}$)}
\begin{algorithmic}[1]
				\STATE read\_block$(W,0,\B)$ \COMMENT{$W_0\gets \B$}
				\STATE W $\gets$ \sprd$(W)$
				\STATE read\_word$(V,\R[t])$  \COMMENT{$V_j\gets \R[t,j]$}
				\STATE write\_content$(W,V,A)$  \COMMENT{$A[\R[t,j]]\gets W_j$}
\end{algorithmic}
\label{alg:rambo_write}
\end{algorithm}
\end{FULL}

Since the read and write operations described above are sufficient to
implement any operation that uses \fsram~memory (any other operation is
implemented in \ram), we have the following result\begin{SHORT} (see~\cite{uwram_full} for the proof)\end{SHORT}.

\begin{restatable}[]{theorem}{fsramthm}
Let $\R$ be any \fsram~memory layout of $r$ registers of at most $b$
bits each and $B$ distinct \fsram~bits, with $b\le w$ and $\log
B\le w$. Let $A$ be any \fsram~algorithm that uses $\R$ and runs in
time $T$.  Algorithm $A$ can be implemented in the \uwram~to run in
time $O(T)$, using $r b+B$ additional words of \ram.\looseness=-1
\label{thm:rambo}
\end{restatable}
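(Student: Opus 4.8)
The plan is to reduce the theorem to two claims: that the RAM representation of the layout uses exactly the stated $rb + B$ words, and that each \fsram~register read and write can be simulated in $O(1)$ time on the \uwram, so that algorithm $A$ incurs only constant overhead per step and runs in time $O(T)$.

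First I would fix the representation already set up above: store the table $\R$ in row-major order, one $w$-bit word per entry, for a total of $rb$ words (each of the $r$ registers contributes at most $b$ entries), and store the array $A$ with $A[i]=\B_i$, one word per distinct \fsram~bit, for $B$ words; together these give the $rb+B$ additional words claimed. This is also where the two hypotheses earn their keep, and I would make their roles explicit. The assumption $\log B\le w$ guarantees that every bit index $\R[i,j]\in\{1,\dots,B\}$ fits into a single $w$-bit block, so it can legitimately serve as an address inside a content access; and $b\le w$ guarantees both that the $b$ addresses of one register fit into the $w$ blocks of a wide word and that the compressed $b$-bit register value fits into a single block.

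Next I would verify the read routine \fsrasmall\_read$(t)$ step by step against the model. A read\_word access loads the $b$ addresses $\R[t,0],\dots,\R[t,b-1]$ into blocks $W_0,\dots,W_{b-1}$ in $O(1)$ time; a content access then replaces each block by $W_j\gets A[\R[t,j]]=\B_{i_j}$, the value of the $j$-th bit of reg$[t]$, still in $O(1)$ time. Since each such value occupies only $W_j[0]$, a single \comp~operation collects the $b$ bits into $W_0$, producing exactly reg$[t]=\B_{i_0}\dots\B_{i_{b-1}}$ (which fits in one block because $b\le w$), and a write\_block returns it. Each of the four operations is constant time by the model, so the read is $O(1)$. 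The write routine is symmetric: load the $b$-bit value into $W_0$, apply \sprd~so that bit $j$ lands in $W_j[0]$, load the addresses $\R[t,\cdot]$ into a second wide word, and use a write\_content access to store each $W_j$ into $A[\R[t,j]]$ in parallel, again in $O(1)$ time.

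Finally I would assemble the simulation: an \fsram~algorithm performs only ordinary \ram~accesses, which the \uwram~supports directly, and \fsram~register reads and writes, each now replaced by the constant-time routines above; hence every step is simulated with $O(1)$ overhead and total time is $O(T)$. The one point needing genuine care — the nearest thing to an obstacle — is correctness of the parallel content accesses under the CREW assumption. For reads there is no issue since distinct \fsram~bits occupy distinct words of $A$. For writes I must argue no write conflict arises: the bits $\B_{i_0},\dots,\B_{i_{b-1}}$ of a single register are distinct \fsram~bits, so the addresses $\R[t,0],\dots,\R[t,b-1]$ are distinct and the $b$ simultaneous writes target distinct words of $A$, which is precisely why the one-bit-per-word representation is chosen and why exclusive-write is respected.
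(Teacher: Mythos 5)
Your proof is correct and takes essentially the same approach as the paper's: the identical $rb+B$-word representation (table $\R$ in row-major order plus the one-bit-per-word array $A$) together with the same constant-time routines built from read\_word, read\_content, \comp, and \sprd, write\_content. Your added detail merely makes explicit what the paper leaves implicit, namely where $b\le w$ and $\log B\le w$ are used and why write\_content causes no CREW conflict---the paper itself notes that the one-bit-per-word layout of $A$ is chosen precisely to avoid serializing concurrent writes.
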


\begin{FULL}
\begin{proof}
Table $\R$ indicating the \fsram~bit identifier for each register and
bit within register can be stored in $rb$ words of \ram, while the
values of each bit can be stored in $B$ words of \ram. Since both
\fsrasmall\_read and \fsrasmall\_write are constant time operations, any
$t$-time operation that uses \fsram~memory can be implemented in \uwram~in the same time $t$. 
\qed
\end{proof}
\end{FULL}

\begin{SHORT}
\paragraph{\textbf{Constant Time Priority Queue}}
\citet{brodnik05} use the Yggdrasil \fsram~memory
\end{SHORT}
\begin{FULL}
\subsection{Constant Time Priority Queue}
\citet{brodnik05} use the Yggdrasil \fsram~memory
\end{FULL}
layout to implement priority queue operations in constant time using
$3M-1$ bits of space ($2M$ of ordinary memory and $M-1$ of \fsram~memory),
where $M$ is the size of the universe. This problem has non-constant
lower bounds for several models\begin{SHORT}, including the \ram~model~\cite{Beame02optimalbounds}.\end{SHORT}
\begin{FULL}, including an
		$\Omega(\min\{\lg \lg M/\lg\lg\lg M,\sqrt{\lg
    N/\lg \lg N}\})$ 
		lower bound in the \ram~model when
the memory is restricted to $N^{O(1)}$, where $N$ is the number of
elements in the set to be maintained~\cite{Beame02optimalbounds}.\end{FULL}
For a universe of size $M=2^m$, for some $m$, the Yggdrasil \fsram~layout consists of
$r= M/2$ registers of $b=\log M$ bits each, and $B=M-1$ distinct
\fsram~bits (Fig.~\ref{fig:rambo_tree} is an example with
$M=16$). Thus, applying Theorem~\ref{thm:rambo} we obtain the
following result:


\begin{corollary}
The discrete extended priority queue problem can be solved in the \uwram~in $O(1)$
time per operation using $2M+w(M/2)\log M+w(M-1)$ bits,
thus in $O(M\log M)$ words of	 \ram.
\end{corollary}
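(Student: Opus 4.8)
The plan is to invoke Theorem~\ref{thm:rambo} with the specific parameters of the Yggdrasil layout and then carry out the space bookkeeping. First I would recall the result of \citet{brodnik05}: using the Yggdrasil \fsram~memory, each discrete extended priority queue operation runs in $O(1)$ time, spending $2M$ bits of ordinary memory together with $M-1$ bits of \fsram~memory. As stated just before the corollary, for a universe of size $M=2^m$ this layout has exactly $r=M/2$ registers, each of $b=\log M$ bits, and $B=M-1$ distinct \fsram~bits.

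Next I would verify that Theorem~\ref{thm:rambo} applies. The theorem requires $b\le w$ and $\log B\le w$; here $b=\log M$ and $\log B=\log(M-1)<\log M$, so both conditions reduce to the standard assumption that the universe fits in a word, namely $\log M\le w$. Under this assumption the theorem simulates every \fsram~read and write in constant time on the \uwram, so Brodnik et al.'s algorithm carries over unchanged with $O(1)$ time per operation, which yields the claimed running time.

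It then remains to add up the space. Theorem~\ref{thm:rambo} charges $rb+B$ additional $w$-bit words of \ram~to hold the layout table $\R$ and the one-bit-per-word array $A$; substituting the Yggdrasil parameters gives $(M/2)\log M+(M-1)$ words, i.e. $w(M/2)\log M+w(M-1)$ bits. Adding the $2M$ bits of ordinary memory used by the original priority queue, the total is $2M+w(M/2)\log M+w(M-1)$ bits, which matches the statement. Dividing through by $w$, the dominant term is $(M/2)\log M$ words, so the whole structure occupies $O(M\log M)$ words of \ram.

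The computation is entirely routine, so I expect no real obstacle beyond careful bookkeeping. The two points that need attention are distinguishing bits from words in the space accounting—the $2M$ ordinary bits are not inflated by a factor of $w$, whereas the simulated \fsram~bits are, since each occupies a full word of $A$—and confirming that the precondition $\log M\le w$ is exactly the universe-fits-in-a-word hypothesis already implicit in the word-\ram~setting.
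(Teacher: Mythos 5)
Your proposal is correct and follows exactly the paper's (implicit) argument: instantiate Theorem~\ref{thm:rambo} with the Yggdrasil parameters $r=M/2$, $b=\log M$, $B=M-1$ on top of the constant-time priority queue of \citet{brodnik05}, then convert the $rb+B$ words of simulation overhead to bits and add the $2M$ ordinary bits. Your extra care in checking the preconditions $b\le w$ and $\log B\le w$, and in keeping the bit/word accounting straight, is sound and consistent with the paper.
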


\begin{SHORT}
\paragraph{\textbf{Constant Time Dynamic Prefix Sums}}
\label{sec:dynamic-prefix-sums}
\citet{brodnik06b} use a modified version of the
\end{SHORT}
\begin{FULL}
\subsection{Constant Time Dynamic Prefix Sums}
\label{sec:dynamic-prefix-sums}
\citet{brodnik06b} use a modified version of the
\end{FULL}
Yggdrasil \fsram~to solve the dynamic prefix sums problem in constant
time. This problem consists of maintaining an array
$A$ of size $N$ over a universe of size $M$ that supports the operations $update(j,d)$, which sets
$A[j]$ to $A[j]\oplus d$, and $retrieve(j)$, which returns
$\oplus_{i=0}^jA[i]$~\cite{fredman82,brodnik06b}, where $\oplus$ is any
associative binary operation. This \fsram~implementation sidesteps lower
bounds on various \begin{SHORT}models~\cite{fredman82,hampapuramF98}.\end{SHORT}
\begin{FULL}models: there is an
$\Omega(\log N)$ algebraic complexity lower bound~\cite{fredman82} as
well as under the semi-group model of
computation~\cite{hampapuramF98}, and an $\Omega(\log N/\log \log N)$
information-theoretic lower bound~\cite{fredman82}.\end{FULL}  \begin{SHORT} See the full version~\cite{uwram_full} for more details.

\begin{corollary}
The  operations of the dynamic prefix sums problem
can be supported in $O(1)$ time in the \uwram~with
$O(M^{\sqrt{\log N}})$ bits of \ram.\looseness=-1
\end{corollary}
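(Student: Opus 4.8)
The plan is to reduce the corollary to Theorem~\ref{thm:rambo} in exactly the way the preceding priority-queue corollary does. \citet{brodnik06b} give an \fsram~data structure that supports $update$ and $retrieve$ in a constant number of \fsram~register operations; since every other step of their algorithm is an ordinary \ram~operation, the whole structure is captured by a single \fsram~layout $\R$. First I would recall this (modified Yggdrasil) layout and read off its three parameters: the number of registers $r$, the register width $b$, and the number of distinct \fsram~bits $B$. By their analysis each operation issues $O(1)$ register reads and writes, and the layout is chosen so that $B = M^{\Theta(\sqrt{\log N})}$ while $b$ stays within one machine word.

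Next I would check that this layout meets the two hypotheses of Theorem~\ref{thm:rambo}, namely $b \le w$ and $\log B \le w$. The bound $b \le w$ follows from the standard assumption $\log M \le w$ together with the fact that a register encodes only a bounded number of $\log M$-bit partial sums. The bound $\log B \le w$ is the delicate one, but it is self-consistent with the space claim: the \uwram~representation allocates one word of the array $A$ for each of the $B$ \fsram~bits, so the structure already occupies $\Theta(B)$ memory cells, and the transdichotomous convention requires a word wide enough to address all memory in use. Hence $w \ge \log B = \Theta(\sqrt{\log N}\,\log M)$ is forced by the size of the structure rather than imposed as a separate hypothesis, and the operations remain well defined.

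With the hypotheses verified, the application of Theorem~\ref{thm:rambo} is immediate: each \fsram~register read is realized by an \fsrasmall\_read (a word read, a content access, and one \comp~operation) and each write by an \fsrasmall\_write, all in $O(1)$ time, so the constant number of register operations per $update$ and $retrieve$ remains $O(1)$ on the \uwram. For the space, Theorem~\ref{thm:rambo} charges $rb + B$ words; plugging in Brodnik et al.'s parameters, the $B$ term dominates, and after accounting for the $w$-bit width of each word the total collapses to $O(M^{\sqrt{\log N}})$ bits, as claimed.

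The hard part will be the parameter bookkeeping rather than any new idea. Concretely, I expect the two subtle points to be (i) reconciling $\log B = \Theta(\sqrt{\log N}\,\log M)$ with the default $w = \Theta(\log n)$, i.e., arguing that this instance is one of the cases in which the model legitimately uses a larger word, justified by the super-polynomial space the structure occupies and the need to address it; and (ii) verifying that translating Brodnik et al.'s \fsram~space bound through the $rb + B$ word count of Theorem~\ref{thm:rambo}, including the table $\R$ and the per-word factor $w$, indeed leaves $M^{\sqrt{\log N}}$ as the dominant term without inflating the exponent.
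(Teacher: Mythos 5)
Your overall strategy---instantiate Theorem~\ref{thm:rambo} on the modified Yggdrasil structure of \citet{brodnik06b}, exactly as the priority-queue corollary instantiates it on the plain Yggdrasil layout---is the paper's strategy. But your parameter bookkeeping misreads the underlying structure, and the misreading is load-bearing. In the $m$-Yggdrasil layout each register is an \emph{entire leaf-to-root path} with $m=\lceil\log M\rceil$ bits stored at each of the $n=\lceil\log N\rceil$ levels, so $b=nm$; a register does not encode ``a bounded number of $\log M$-bit partial sums,'' and the hypothesis $b\le w$ is precisely the assumption $nm\le w$ inherited from \citet{brodnik06b} (a whole path fits in one word), not a consequence of $\log M\le w$. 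More importantly, the \fsram~memory itself is small: there are only $B=(N-1)m$ distinct \fsram~bits (one $m$-bit field per internal tree node), so with $r=N$ the simulation overhead of Theorem~\ref{thm:rambo} is $rb+B=O(Nmn)$ words, i.e.\ $O(Nmnw)$ bits, and $\log B\le w$ holds trivially. Your ``delicate'' transdichotomous argument that $w\ge\log B=\Theta(\sqrt{\log N}\,\log M)$ is forced by the structure's footprint defends a hypothesis that was never in danger, because your premise $B=M^{\Theta(\sqrt{\log N})}$ is false for this layout.

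The genuine gap is where the $O(M^{\sqrt{\log N}})$ bits actually come from. In \citet{brodnik06b}, constant-time $update$ and $retrieve$ work by reading the path register and then processing its $nm$ bits via precomputed lookup tables held in \emph{ordinary} \ram; the trade-off is time $O(\iota+1)$ with $O(M^{n/2^{\iota}}\cdot m+m)$ bits of regular \ram~for the tables. These tables pass through the \uwram~simulation untouched---Theorem~\ref{thm:rambo} charges $rb+B$ only as \emph{additional} words for $\R$ and $A$---and it is the table term, at the constant-time end of the trade-off, that dominates and yields the stated bound, with the simulation's $O(Nmnw)$ bits as the lower-order term. By folding the dominant space into $B$ and dropping the tables, your account not only gets the space accounting wrong but also silently loses the justification for the $O(1)$ time bound: \fsrasmall\_read delivers the packed path in constant time, but without the table-lookup machinery (and its space) the constant-time processing of that path---the other half of each operation---is unsupported.
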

\end{SHORT}
 
\begin{FULL}
The result of \citet{brodnik06b} uses a complete binary
tree on top of array $A$ as leaves. The tree is similar to the one
used in the priority queue problem, but it differs in that only
internal nodes store any information and in that there are $m=\lceil\log
M\rceil$ bits stored in each node. This tree is stored in a variant of the Yggdrasil memory
called $m$-Yggdrasil, in which each register corresponds again to a
path from a leaf to the root, but this time each node stores not only
one bit but the $m$ bits containing the sum of all values in the leaves of the left
subtree of that node~\cite{brodnik06b}. It is assumed that $nm\le w$,
where $n=\lceil\log N\rceil$ and $w$ is the size of the word in
bits. Thus, an entire path from leaf to root fits in a word and can be
accessed in constant time. An update or retrieve operation consists of
retrieving the values along a path in the tree and processing them in
constant time using bit-parallelism and table lookup operations. The
space used by the lookup table can be reduced at the expense of an
increased time for the retrieve operation. In general, both operations
can be supported in time $O(\iota+1)$ with $(N-1)m$ bits of
$m$-Yggdrasil memory and $O(M^{n/2^\iota}\cdot m+m)$ bits of
\ram, where $\iota$ is a trade-off parameter~\cite{brodnik06b}.

In order to represent the $m$-Yggdrasil memory in our model, we treat
each bit of a node in the tree as a separate \fsram~bit. Thus, the \fsram~memory has $r=N$ registers of $b=nm$ bits each, and there are
$B=(N-1)m$ distinct bits to be stored. Hence, by Theorem~\ref{thm:rambo}
we have:

\begin{corollary}
The operations update and retrieve of the dynamic prefix sums problem
can be supported in the \uwram~model in $O(\iota+1)$ time with $O(M^{n/2^\iota}\cdot
m+Nmnw)$ bits of \ram. For constant time operations ($\iota=1$) the
space is dominated by the first term, i.e., the space is
$O(M^{\sqrt{\log N}})$ bits. For $\iota=\log\log N$, the time is
$O(\log \log N)$ and the space is $O(Nmnw)$ bits.
\end{corollary}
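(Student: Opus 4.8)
The plan is to reduce the result directly to Theorem~\ref{thm:rambo} by encoding the $m$-Yggdrasil solution of Brodnik et al.~\cite{brodnik06b} as a concrete \fsram~layout and then reading off the resulting \uwram~time and space. Recall that their \fsram~algorithm supports both \emph{update} and \emph{retrieve} in time $O(\iota+1)$ using $(N-1)m$ bits of $m$-Yggdrasil memory together with $O(M^{n/2^\iota}\cdot m+m)$ bits of ordinary \ram~for the lookup tables, where $m=\lceil\log M\rceil$, $n=\lceil\log N\rceil$, and the standing assumption is $nm\le w$. Since Theorem~\ref{thm:rambo} makes the running time of \emph{any} \fsram~algorithm carry over to the \uwram~up to constant factors, the time bound $O(\iota+1)$ is immediate, and the only real work is in accounting for the space.

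First I would fix the \fsram~parameters that describe the $m$-Yggdrasil tree. The complete binary tree over the $N$-element array has $N-1$ internal nodes, each storing $m$ bits, and a register is the leaf-to-root path of a leaf. Treating every individual bit of every node as a separate \fsram~bit, there are $r=N$ registers, each of $b=nm$ bits (a path visits $n$ levels at $m$ bits per node), and $B=(N-1)m$ distinct \fsram~bits. I would then verify the hypotheses of Theorem~\ref{thm:rambo}: the bound $b=nm\le w$ is exactly Brodnik et al.'s assumption, and $\log B=\log((N-1)m)\le w$ holds since $N$ and $m$ are polynomially bounded in the quantities representable in a word.

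With the parameters in hand, Theorem~\ref{thm:rambo} gives a \uwram~implementation running in the same asymptotic time $O(\iota+1)$ and using $rb+B=N\cdot nm+(N-1)m=O(Nmn)$ additional words, i.e.\ $O(Nmnw)$ bits, to store the layout table $\R$ and the bit array $A$. Adding the $O(M^{n/2^\iota}\cdot m+m)$ bits that Brodnik et al.'s algorithm itself requires for its lookup tables, the total becomes $O(M^{n/2^\iota}\cdot m+Nmnw)$ bits, which is the claimed general bound. The two regimes then follow by substituting values of $\iota$: for $\iota=1$ the running time is constant and the table term dominates, giving the stated constant-time space bound; for $\iota=\log\log N$ the exponent $n/2^\iota$ collapses to a constant, so the $M$-dependent term vanishes against the $O(Nmnw)$ representation term, at the cost of $O(\log\log N)$ time.

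The main obstacle I anticipate is not the arithmetic but the modelling step: one must argue that decomposing each $m$-bit node into $m$ independent (overlapping) \fsram~bits faithfully preserves Brodnik et al.'s algorithm, so that their register reads and writes along a tree path correspond exactly to the content-access together with \comp~and \sprd~operations underlying Theorem~\ref{thm:rambo}. Once that correspondence is established, the remainder is just the bookkeeping of the two space terms described above.
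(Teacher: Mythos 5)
Your proposal matches the paper's own derivation essentially step for step: the paper likewise treats each bit of each $m$-bit node as a separate \fsram~bit, obtaining $r=N$ registers of $b=nm$ bits and $B=(N-1)m$ distinct bits (with $nm\le w$ inherited from Brodnik et al.), and then invokes Theorem~\ref{thm:rambo} to carry over the $O(\iota+1)$ time while adding $rb+B=O(Nmn)$ words, i.e.\ $O(Nmnw)$ bits, to the $O(M^{n/2^\iota}\cdot m+m)$ bits of ordinary \ram. Your extra care about verifying $\log B\le w$ and about the faithfulness of the bit-level decomposition is sound but goes slightly beyond what the paper makes explicit; the substitutions for $\iota=1$ and $\iota=\log\log N$ are the same bookkeeping as in the paper.
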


\end{FULL}

\section{Dynamic Programming}
\label{sec:dp}


In this section we describe \uwram~implementations of dynamic programming algorithms for the subset sum, knapsack, and longest common subsequence problems. 
A \wordRAM~algorithm that only uses bit parallelism can be translated directly to the \uwram. The algorithm for subset sum is an example of this. 
In general, however, \wordRAM~algorithms that use lookup tables cannot be directly extended to $w^2$ bits, as this would require a mechanism to address $\Theta(w^2)$-bit words in memory as well as lookup tables of prohibitively large size. Hence, extra work is required to simulate table lookup operations. The knapsack implementation that we present is a good example of such case. 
\begin{SHORT}
We note that these problems have many generalizations that
can be solved using the same techniques and describe them further in the full version~\cite{uwram_full}. 
\end{SHORT}

\begin{SHORT}
\paragraph{\textbf{Subset Sum}}
Given a set $S=\{a_1,a_2,\ldots,a_n\}$ of nonnegative integers
\end{SHORT}
\begin{FULL}
\subsection{Subset Sum}
Given a set $S=\{a_1,a_2,\ldots,a_n\}$ of nonnegative integers
\end{FULL}
(weights) and an integer $t$ (capacity), the subset sum problem is to
find $S'\subseteq S$ such that $\sum_{a_i\in S'}a_i=t$\begin{SHORT}~\cite{CLRS01}.\end{SHORT}
\begin{FULL} The
optimization version asks for the solution of maximum weight which
does not exceed $t$~\cite{CLRS01}.\end{FULL}
This problem is $\NP$-hard, but it can
solved in pseudopolynomial time via dynamic programming in $O(nt)$
time, using the following recurrence~\cite{bellman57}: for
each $0 \le i \le n$ and $0\le j \le t$, $C_{i,j}=1$ if and only if there is
a subset of elements $\{a_1,\ldots,a_i\}$ that adds up to $j$. Thus, 
$C_{0,0}=1$, $C_{0,j}=0$ for all $j>0$, and $C_{i,j}=1$ if
$C_{i-1,j}=1$ or $C_{i-1,j-a_i}=1$ ($C_{i,j}=0$ for
any $j<0$). The problem admits a solution if $C_{n,t}=1$.

\citet{pisinger03} gives an algorithm that implements this recursion in the
word-\ram~with word size $w$ by representing up to $w$ entries of
a row of $C$. Using bit parallelism, $w$ bits of a row can be updated
simultaneously in constant time from the entries of the previous
row: $C_{i}$ is updated by computing
$C_i=(C_{i-1} \ | \ (C_{i-1} >> a_i))$ (which might require shifting
words containing $C_{i-1}$ first by $\lfloor a_i/w \rfloor$ words and
then by $a_i-\lfloor a_i/w \rfloor$)~\cite{pisinger03}. Assuming $w=\Theta(\log t)$,
this approach leads to an $O(nt/\log t)$ time solution in $O(t/\log t)$
space. 
\begin{FULL}The actual elements in $S'$ that form the solution can be 
recovered with the same space and time bounds with a recursive
technique by~\citet{pferschy99}.\end{FULL}

This algorithm can be implemented directly in the \uwram:
entries of row $C_i$ are stored contiguously in memory; thus, we can
load and operate on $w^2$ bits in $O(1)$ time when updating each
row. Hence, the \uwram~implementation runs in $O(nt/\log^2 t)$ time
using the same $O(t/\log t)$ space (number of $w$-bit words).

\begin{SHORT}
\paragraph{\textbf{Knapsack}}
Given a set $S$ of $n$ elements with weights and values, the knapsack
\end{SHORT}
\begin{FULL}
\subsection{Knapsack}
Given a set $S$ of $n$ elements with weights and values, the knapsack
\end{FULL}
problem asks for a subset of $S$ of maximum value such that the total
weight is below a given capacity bound $b$. Let
$S=\{(w_i,v_i)\}_{i=1}^{n}$, where $w_i$ and $v_i$ are the weight and
value of the $i$-th element. 
Like subset sum, this problem is $\NP$-hard but can be solved in pseudopolynomial time using
the following recurrence~\cite{bellman57}: let $C_{i,j}$ be
the maximum value of a solution containing elements in the subset
$S_i=\{(w_k,v_k)\}_{k=1}^{i}$ with maximum capacity $j$. Then,
$C_{0,j}=0$ for all $0\le j\le b$, and
$C_{i,j}=\max\{C_{i-1,j},C_{i-1,j-w_i}+v_i\}$. The value of the
optimal solution is $C_{n,b}$. This leads to a dynamic program that runs in $O(nb)$ time.\looseness=-1 

The word-\ram~algorithm by~\citet{pisinger03} represents partial solutions of the
dynamic programming table with two binary tables $g$ and $h$ and
operates on $O(w)$ entries at a time. More
specifically, $g_{i,u}=1$ and $h_{i,v}=1$ if and only if
there is a solution with weight $u$ and value $v$ that is not
dominated by another solution in $C_{i,*}$ (i.e., there is no entry
$C_{i,u'}$ such that $u'< u$ and $C_{i,u'}\ge v$).  Pisinger shows how
to update each entry of $g$ and $h$ with a constant time procedure,
which can be encoded as a constant size lookup table $T$. A new lookup table $T^\alpha$ is obtained as the product of $\alpha$ times the original table $T$. Thus, $\alpha$ entries of $g$ and $h$ can be computed in constant time. Setting $\alpha = w/10$, an entire row of $g$ and $h$ can be computed in
$O(m/w)$ time and $O(m/w)$ space~\cite{pisinger03}, where $m$ is the maximum of the
capacity $b$ and the value of the optimal solution\begin{FULL}\footnote{This value
  is not known in advance, though an upper bound of at most twice the
  optimal value can be used~\cite{pisinger03,dantzig57}.}\end{FULL}. 
	The optimal
solution can then be computed in $O(nm/w)$ time.

Compared to the subset sum algorithm, which relies mainly on
bit-parallel operations, this word-\ram~algorithm for knapsack relies
on precomputation and use of lookup tables to achieve a $w$-fold speedup. 
While we
cannot precompute a composition of $\Theta(w^2)$ lookup tables to
compute $\Theta(w^2)$ entries of $g$ and $h$ at a time, we can use the
same tables with $\alpha=w/10$ as in Pisinger's algorithm and use the \emph{read\_content} operation
of the \uwram~to make $w$ simultaneous lookups to the
table. Since the entries in a row $i$ of $h$ and $g$ depend only on entries in row
$i-1$, then there are no dependencies between entries
in the same row.

One difficulty is that in order to compute the entries in row $i$ in
parallel we must first preprocess row $i-1$ in both $h$ and $g$, such
that we can return the number of one bits in both
$g_{i-1,0},...,g_{i-1,j}$ and $h_{i-1,0},...,h_{i-1,j}$ in $O(1)$ time
for any column $j \in \{0,m-1\}$.  That is, the prefix sums of the one
bits in row $i-1$.  Note that 
this is
\emph{not} the same as the dynamic problem described in
Section~\ref{sec:dynamic-prefix-sums}, but it is a static prefix sums
problem. 
\begin{SHORT}
We describe how to compute the prefix sums of a row of $g$ and $h$ in $O(m/w^2)$ time in the full version~\cite{uwram_full}.\looseness=-1
\end{SHORT}
\begin{FULL}
Furthermore, since the algorithm is the same for both $g$ and
$h$, we describe the computation for $g$ alone.\looseness=-1

\paragraph{Static Prefix Sums} 
We divide $g_{i-1}$ in blocks of $w$ contiguous bits and compute the number of ones in each block 
$g_{i-1,k},...,g_{i-1,k+w-1}$ for $k \in
\{0,w,2w,...,\lfloor m/w\rfloor w \}$ using a lookup table. We store
the results in an array $\prefixsumarray$ of length $\lceil m/w
\rceil$, with $A[k]$ storing the number of ones in the $k$-th block.  Next, we compute the prefix sums $\prefixsumarray'$ of $\prefixsumarray$ in
two steps.  We divide $\prefixsumarray$ in subarrays of $w$ consecutive entries. Let $\prefixsumarray_i$ denote the subarray $\prefixsumarray[iw,iw+w-1]$, for $i\in \{0,1,\ldots,\lceil|\prefixsumarray|/w\rceil-1\}$.

The first step is to compute the prefix sums $\prefixsumarray'_i$ of each subarray $\prefixsumarray_i$, i.e. $\prefixsumarray_i'[k]=\sum_{j=0}^k \prefixsumarray_i[j]$.
Using the $w$ blocks of a wide word, we can operate on $w$ entries at a time. Consider the first $w$ consecutive subarrays $\prefixsumarray_{0},\prefixsumarray_1,\ldots,\prefixsumarray_{w-1}$. In order to compute $\prefixsumarray'_0,\ldots,\prefixsumarray'_{w-1}$, for each $0\le k \le w-1$, we use the $i$-th block of the wide work to compute $\prefixsumarray'_{i}[k]$, thus computing the entries for all $0\le i\le w-1$ simultaneously. Each entry is computed in constant time, since  
\begin{equation} \notag
\prefixsumarray_i'[k] = \begin{cases} \prefixsumarray_i'[k-1] + \prefixsumarray_i[k] & \text{if $k>0$,}\\\prefixsumarray_i[k] & \text{otherwise.} \end{cases}
\end{equation} 
Hence, we can compute the prefix sums of $w$ subarrays in $O(w)$ time. After computing the first $w$ subarrays we continue with the second group, and so on. Thus, we compute all prefix sums of the $O(|\prefixsumarray|/w)$ subarrays in $O(|\prefixsumarray|/w)$ time.

The second step is to update each subarray of $\prefixsumarray'$ by adding to each entry the last entry of the previous subarray. I.e., we set $\prefixsumarray'_i[k] = \prefixsumarray'_{i}[k] + \prefixsumarray'_{i-1}[w-1]$ for all $i=1,\ldots,\lceil|\prefixsumarray'|/w\rceil -1$ (in increasing value of $i$).  This can also be done for $w$ entries at once, but this time we use the blocks of the wide word to update all entries of one subarray simultaneously. Thus, sequentially for each $i=1,\ldots,\lceil|\prefixsumarray'|/w\rceil -1$  we update $\prefixsumarray'_i$ in $O(1)$ time, and hence $\prefixsumarray'$ is updated in $O(|\prefixsumarray|/w)$ time.

At this point, $\prefixsumarray'$
contains the prefix sums of $\prefixsumarray$, and took
$O(|\prefixsumarray|/w) = O(m/w^2)$ time to compute. 
Fig.~\ref{fig:prefix_sums} shows an example of this procedure.

Let $f$ be the number of ones in $g_{i-1,\lfloor j/w\rfloor}, ...,
g_{i-1,j}$, which can be computed using the lookup table.  To compute the number of ones in 
$g_{i-1,0},...,g_{i-1,j}$ we return $f + \prefixsumarray'[\lfloor j/
  w\rfloor]$.  

\begin{figure}[!t]
\begin{center}
\includegraphics[scale=0.80]{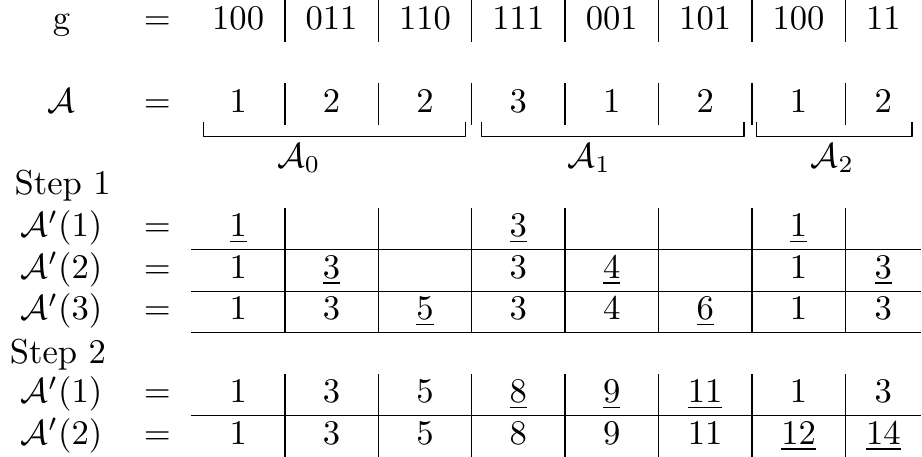}
\caption[Prefix sums example]{Example of computing prefix sums in the \uwram~with $w=3$ and $m=23$. Numbers in parenthesis indicate the parallel step number when computing $\prefixsumarray'$ and underlined entries indicate the entries computed in that step. 
	}
\label{fig:prefix_sums}
\end{center}
\end{figure}

\end{FULL}
Then, each row of $g$ and $h$ takes $O(m/w^2)$ time to
compute, and since there are $n$ rows, the total time to compute $g$ and $h$ (and hence the optimal solution)
on the \uwram~is $O(nm/w^2)$.  This achieves a $w$-fold speedup over
Pisinger's word-\ram~solution.

\begin{FULL}
\subsection{Generalizations of Subset Sum and Knapsack Problems}
\label{sec:knap-gen}

\citet{pisinger03} uses the techniques of the word-\ram~
algorithm for subset sum and knapsack to obtain a word-\ram~algorithm
for computing a path in a layered network: given a graph $G=(V,E)$, a
source $s\in V$ and a terminal $t\in V$, and a weight for each edge,
is there a path of weight $b$ from $s$ to $t$? Again, this algorithm
translates directly to a \uwram~algorithm, thus yielding a $w$-fold speedup
over the word-\ram~algorithm. Pisinger further uses the algorithms for the  problems above to 
implement word-\ram~solutions for other generalizations of subset sum
and knapsack problems, such as: the bounded subset sum and knapsack
problems (each element can be chosen a bounded number of times), the
multiple choice subset sum and knapsack problems (the set of numbers
is divided in classes and the target sum must be matched with one
number of each class), the unbounded subset sum and knapsack problems
(each element can be chosen an arbitrary number of times), the
change-making problem, and, finally, the two-partition problem. \uwram~implementations for all these generalizations are direct and yield a
$w$-fold speedup over the word-\ram~algorithms (recall that $w=\Omega(\log n)$).
\end{FULL}

\begin{SHORT}
\paragraph{\textbf{Longest Common Subsequence}}
\label{sec:uwram_lcs}
The final dynamic programming problem we examine is that of computing
\end{SHORT}
\begin{FULL}
\subsection{Longest Common Subsequence}
The final dynamic programming problem we examine is that of computing
\label{sec:uwram_lcs}
\end{FULL}
the longest common subsequence (LCS) of two string sequences 
\begin{SHORT}
(see the full version~\cite{uwram_full} for a definition).
\end{SHORT}
\begin{FULL}
(Definition~\ref{def:lcs}).
\begin{definition}
\label{def:lcs}[LCS]
Given a
sequence of symbols $X=x_1x_2\ldots x_m$, a sequence $Z=z_1z_2\ldots
z_k$ is a subsequence of $X$ if there exists an increasing sequence of
indices $i_1,i_2,\ldots,i_k$ such that for all $1\le j \le k$,
$x_{i_j}=z_j$~\cite{CLRS01}. 
Let $\Sigma$ be a finite alphabet of
symbols, and let $\sigma = |\Sigma|$. Given two sequences
$X=x_1x_2\ldots x_m$ and $Y=y_1y_2\ldots y_n$, where $x_i,y_j\in
\Sigma$, the Longest Common Subsequence problem asks for a sequence $Z=z_1z_2\ldots z_k$ of
maximum length such that $Z$ is a subsequence of both $X$ and $Y$.\looseness=-1
\end{definition}

\end{FULL}
This problem can be solved via a classic dynamic programming algorithm
in $O(nm)$ time~\cite{CLRS01}. 
\begin{SHORT}
In~\cite{uwram_full} we \end{SHORT}
\begin{FULL}We \end{FULL} describe a \uwram~algorithm for LCS based on an algorithm by Masek and Paterson~\cite{masek80}. We note that there exist other approaches to solving the LCS problem with bit-parallelism (e.g.,~\cite{Crochemore01}) that could also be adapted to work in the \uwram. The approach we show here is a good example of bit parallelism combined with the parallel lookup power of the model, which we use to implement the Four Russians technique. 
\begin{SHORT} 
We obtain the following results:\looseness=-1
\end{SHORT}

\begin{FULL} The base algorithm, which mainly relies on bit parallelism, leads to Theorem~\ref{thm:lcs1}. We then extend the algorithm with the Four Russians technique to achieve further speedups, obtaining Theorem~\ref{thm:lcs2}.\end{FULL}

\begin{restatable}[]{theorem}{lcsa}
\label{thm:lcs1}
The length of the LCS of two strings  $X$
and $Y$ over an alphabet of size $\sigma$, with $|X|=m$ and $|Y|=n$,
can be computed in the \uwram~in
$O(\frac{nm}{w^2}\log \sigma+m+n)$ time and $O(\frac{\min(n,m)}{w}\log \sigma)$
words in addition to the input.
\end{restatable}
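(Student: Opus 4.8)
The plan is to lift a classical \emph{bit-parallel} LCS computation into the \uwram, using the wide word to process $w^2$ cells of one row of the dynamic programming matrix at a time, and to generate the character-match information \emph{on the fly} so that neither time nor space carries a factor of $\sigma$ beyond $\log\sigma$. Assume without loss of generality that $Y$ is the shorter string, so $\min(n,m)=n$; place $Y$ along the columns and iterate over the $m$ characters of $X$ as rows. The starting point, going back to the Masek--Paterson block structure, is that adjacent entries of a row of the LCS matrix differ by exactly $0$ or $1$. Hence a whole row $L_i$ can be encoded by a single bit per column, a set bit marking the columns at which the LCS length increases, and $L_i$ is obtained from $L_{i-1}$ together with the \emph{match vector} $M_{x_i}$ (bit $j$ set iff $y_j=x_i$) by a fixed-length sequence of bitwise operations and one integer addition that propagates a carry along the row. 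The final answer is the number of set bits in $L_m$.

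First I would implement a single row update on the wide word. When $n\le w^2$ the whole row fits in one $w^2$-bit wide word, the addition in the update propagates its carry across the entire word (recall that basic operations of the model have no notion of block boundaries), and the row is updated in $O(1)$ wide-word operations. When $n>w^2$ I would process the row in $\lceil n/w^2\rceil$ consecutive wide-word chunks from least to most significant column, threading the single carry bit produced by the addition of one chunk into the next. This keeps a row update at $O(n/w^2)$ operations while reproducing exactly the effect of one addition over the full length-$n$ integer.

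The step that fixes both the $\log\sigma$ factor and the space bound is producing $M_{x_i}$ without storing $\sigma$ precomputed vectors. I would keep only a bit-packed copy of $Y$ using $\lceil\log\sigma\rceil$ bits per character, which occupies $O(\tfrac{n}{w}\log\sigma)$ words. To build the portion of $M_{x_i}$ spanning $\lfloor w^2/\lceil\log\sigma\rceil\rfloor$ columns, I broadcast $x_i$ into every field of a wide word, compare it field-wise against the packed $Y$ (a constant-length arithmetic-and-boolean zero-field test leaving one flag bit per field), and gather the flags into a contiguous run by a constant-depth operation analogous to \comp~(the model explicitly permits operations of this complexity; note the fields here have width $\lceil\log\sigma\rceil$ rather than the block width $w$ for which \comp~is stated). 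Since only a $1/\log\sigma$ fraction of a wide word is consumed per operation, assembling the $w^2$ match bits of one update chunk costs $O(\log\sigma)$ operations. A full row therefore costs $O(\tfrac{n}{w^2}\log\sigma)$ for match generation plus $O(\tfrac{n}{w^2})$ for the update; over $m$ rows this is $O(\tfrac{nm}{w^2}\log\sigma)$, and adding $O(m+n)$ for reading and packing the inputs and for the final \comp-based population count yields the claimed time. The only persistent storage is the packed $Y$ and a constant number of length-$n$ row vectors, i.e. $O(\tfrac{\min(n,m)}{w}\log\sigma)$ words.

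The main obstacle I expect is correctness rather than bookkeeping, in two places. The first is that the bit-parallel row recurrence hinges on a single global addition whose carry encodes the interaction between accumulated increments and new matches, so I must prove that the sequential carry threading between wide-word chunks coincides with one addition over the whole length-$n$ integer; I would do this by an invariant on the carry-out of each chunk and induction on the chunk index. The second is verifying that the broadcast-compare-gather construction of $M_{x_i}$ agrees bit for bit with the naive definition, and that the generalized $\log\sigma$-granular gather is genuinely a constant-depth ($\AC^0$) operation of the same kind the model admits. Everything else---the shift alignments inside the update, masking, and the final popcount---is routine constant-factor overhead layered on these two verified primitives.
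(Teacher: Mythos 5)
Your proposal is correct in outline but takes a genuinely different route from the paper. The paper never works row by row with one bit per cell: it computes the Masek--Paterson difference tables $H$ and $V$ (horizontal and vertical differences of the LCS table, each entry in $\{0,1\}$) \emph{diagonal by diagonal}, storing every cell in an $f$-bit field with $f=\max(\lceil\log\sigma\rceil,2)+1$, and updating a whole diagonal by two field-wise subtractions and a field-wise maximum with $\vec{0}$, via the standard test-bit comparator technique. The $\log\sigma$ factor in the time bound thus arises because only $w^2/f$ cells fit per wide word, and the space bound because only four $f$-bit-per-field diagonals, of length at most $\min(n,m)+1$, are kept at any time. Your route is instead the Crochemore-style row recurrence (which the paper explicitly notes could also be adapted to the \uwram): one bit per cell, one global addition per row whose carry propagation does the combinatorial work, with the $\log\sigma$ factor paid only in generating the match vector on the fly. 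What your approach buys is a leaner state (one bit per cell, with space dominated by the packed copy of $Y$); what the paper's buys is that it never has to convert between field granularity and bit granularity, so it runs entirely on the two non-standard operations (\comp, \sprd) the model commits to.

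That conversion is where your two primitive-level obligations lie, and you flagged only one of them. The gather you need is a compress at stride $\lceil\log\sigma\rceil$, whereas \comp\ is defined only at block stride $w$; simulating it from \comp\ in the restricted instruction set is not $O(1)$ (roughly $\Theta(w/\log\sigma)$ \comp\ calls each followed by a strided re-interleave, which itself is not a shift), so you are genuinely extending the instruction set. This is defensible --- for each fixed field width the gather is a wiring, a width-multiplexed version is a constant-depth, polynomial-size circuit, and the paper explicitly permits defining further operations of similar complexity --- but note the paper's proof requires no such extension. Second, you treat broadcasting $x_i$ into every field as free; in the restricted model (no multiplication) replication costs $\Theta(\log w)$ shift-OR doublings per row, an additive $O(m\log w)$ that exceeds the claimed bound in regimes where $n\log\sigma\ll w^2$ (there the claimed time is $O(m+n)$), so you must either grant broadcast as another $\AC^0$ wiring primitive or reorganize. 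It is worth seeing how the paper's diagonal order sidesteps both issues at once: along an anti-diagonal each cell pairs $x_i$ with a \emph{distinct} $y_j$, so both comparison operands are contiguous substrings (with $X$ stored reversed) loaded by a single word read --- no broadcast and no gather are ever needed. Everything else in your sketch --- the carry threading across $w^2$-bit chunks, the row recurrence itself, the popcount, and the time/space accounting --- is sound.
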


\begin{restatable}[]{theorem}{lcsb}
\label{thm:lcs2}
The length of the LCS of two strings  $X$
and $Y$ of length $n$ over an alphabet of size $\sigma$
can be computed in the \uwram~in
$O(n^2\log^2(\sigma)/w^3+n\log(\sigma)/w)$ time. For $\sigma=O(1)$ and
$w=\Theta(\log n)$ this time is $O(n^2/\log^3 n)$.
\label{thm:four_russians_uw_ram}
\end{restatable}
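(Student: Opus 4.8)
The plan is to combine the bit-parallel machinery behind Theorem~\ref{thm:lcs1} with the Four Russians block decomposition of Masek and Paterson~\cite{masek80}, and then to exploit the parallel table-lookup power of the \uwram\ (the \emph{read\_content} operation) to process $w$ blocks of the dynamic programming matrix at once, gaining an extra factor of $w$ over a \wordRAM\ Four Russians implementation.

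\emph{Block decomposition.} I would partition the $(n+1)\times(n+1)$ LCS matrix into square blocks of side $t$. Since adjacent cells of an LCS matrix differ by $0$ or $1$, the top row and left column of a block are each described by a vector of $t$ delta-bits, and the bottom row and right column (the block's \emph{output}) likewise. A block's output boundary is a function only of its input boundary ($2t$ bits) together with the length-$t$ substrings of $X$ and $Y$ indexing its columns and rows ($t\lceil\log\sigma\rceil$ bits each). I would pack all of this into one integer of $2t + 2t\lceil\log\sigma\rceil = \Theta(t\log\sigma)$ bits and precompute a table $T$ mapping such an encoding to the output boundary. Choosing $t = \Theta(w/\log\sigma)$ makes each encoding fit in a single $w$-bit word, so that it can serve as an address in a content access; the table then has $2^{\Theta(w)} = n^{\Theta(1)}$ entries, and fixing the hidden constant in $t$ small enough keeps $|T|$ and its $O(t^2)$-per-entry precomputation below the final running time.

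\emph{Parallel block processing.} The block grid is $(n/t)\times(n/t)$, and block $(I,J)$ depends on $(I-1,J)$, $(I,J-1)$, and $(I-1,J-1)$, so I would process blocks in anti-diagonal order, the blocks on a common anti-diagonal being mutually independent. For each anti-diagonal I assemble, into the $w$ blocks of one wide word, the $w$-bit table indices of up to $w$ matrix-blocks (their input boundaries combined with the precomputed substring codes), and a single \emph{read\_content} then retrieves all $w$ output boundaries simultaneously. With $O(n/t)$ anti-diagonals of at most $O(n/t)$ blocks each, the total cost is $O((n/t)^2/w + n/t)$; substituting $t=\Theta(w/\log\sigma)$ yields $O(n^2\log^2(\sigma)/w^3 + n\log(\sigma)/w)$, which for $\sigma=O(1)$ and $w=\Theta(\log n)$ is $O(n^2/\log^3 n)$. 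Reading the input and precomputing $T$ are dominated by this bound.

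\emph{Main obstacle.} The delicate part is the $O(1)$-per-step bookkeeping that feeds the parallel lookup and routes its results: each block's bottom-row output becomes the top-row input of the block below, and its right-column output becomes the left-column input of the block to its right, both of which sit on the next anti-diagonal. I would keep the boundary data in arrays indexed by block coordinate and use block and word accesses together with \comp, \sprd, and shifts to gather the $w$ input boundaries and scatter the $w$ pairs of output boundaries per anti-diagonal. Making this gather/scatter run in constant amortized time, so that the anti-diagonal count rather than the boundary shuffling dominates, is the crux of the argument.
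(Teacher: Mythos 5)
Your proposal matches the paper's proof essentially step for step: the same Four Russians decomposition into $t\times t$ blocks of the difference tables $H$ and $V$, the same encoding of a block by its $2t$ boundary bits plus the two length-$t$ substrings (giving $O((2\sigma)^{2t})$ table entries), the same choice $t=\Theta(w/\log\sigma)$ so that an index fits in one $w$-bit block of the wide word, and the same anti-diagonal processing with one \emph{read\_content} performing $w$ block lookups at once, yielding $O((n/t)^2/w + n/t)$. The gather/scatter bookkeeping you flag as the crux is treated no more explicitly in the paper, which simply asserts the diagonal-by-diagonal parallel lookup as in the earlier LCS algorithm, so your sketch is, if anything, slightly more careful on that point.
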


\begin{FULL}

Let $c_{i,j}$ denote the length of the
LCS of $X[1..i]=x_1x_2\ldots x_i$ and $Y[1..j]=y_1y_2\ldots y_j$. Then
the following recurrence allows us to compute the length of the LCS of
$X$ and $Y$~\cite{CLRS01}:

\begin{equation}
c_{i,j}=\left\{\begin{array}{ll}
	0, &  \textrm{ if } i=0 \textrm{ or } j=0\\
	c_{i-1,j-1}+1, & \textrm{ if } x_i=y_j \\
	\max\{c_{i,j-1},c_{i-1,j}\}, & \textrm{ otherwise. }
	   \end{array}\right.
	   \label{eqn:lcs_rec}
\end{equation}

The length of the LCS is $c_{m,n}$, which can be computed in $O(mn)$
time. Consider an $(m+1)\times (n+1)$ table $C$ storing the values
$c_{i,j}$. The idea of the \uwram~algorithm is to compute various
entries of this table in parallel. We assume $w=\Theta(\max\{\log n,
\log m\})$.

Let $d_k$ denote the values in the $k$-th diagonal of table $C$,
this is $d_k=\{c_{i,j}| i+j=k\}$. Since a value in a cell $i,j>0$
depends only on the values of cells $(i-1,j)$, $(i-1,j-1)$ and
$(i,j-1)$, all values in the same diagonal $d_k$ can be computed in parallel.  Thus, we use the wide
word to compute various entries of a diagonal in constant time. Since
each value in the cell might use up to $\min\{\log n,\log m\}$ bits,
each value might use up to an entire block of the wide word (if
$\log m=\Theta(\log n)$); thus, $w$ cells can be computed in parallel. Since the
total number of cells is $O(mn)$ and the critical path of the table has
$m+n+1$ cells, this approach takes $O(mn/w+m+n)$ parallel time,
resulting in a speedup of $w$. However, we can obtain better speedups
by using fewer bits per entry of the table, which enables us to
operate on more values in parallel. For this sake, instead of storing
the actual values of the partial longest common subsequences, we store
differences between consecutive values as described in~\cite{masek80}
for the related string edit distance problem.

\begin{figure*}[t]
\begin{center}
\small
\begin{tabular}{p{4.2cm} p{0.05cm} p{4.2cm} p{0.05cm} p{4.2cm}}
\begin{tabular}{p{0.3cm} p{0.3cm} | p{0.2cm}  p{0.2cm} p{0.2cm} p{0.2cm} p{0.2cm} p{0.2cm} p{0.2cm} }
& &	j &	1&	2&	3&	4& 5&	6\\
\multicolumn{2}{c|}{LCS} &	&  a&	a&	b&	b&	b&	a\\\hline
i & &	0&	0&	0&	0&	0&	0&	0\\
1 & a &	0	&1	&1	&1&	1&	1&	1\\
2 & b&	0&	1&	1&	2&	2&	2&	2\\
3 & b&	0&	1&	1&	2&	3&	3&	3\\
4& a&	0&	1&	2&	2&	3&	3&	4\\
5 &b&	0&	1&	2&	3&	3&	4&	4
\end{tabular}
&

&
\begin{tabular}{p{0.3cm} p{0.3cm} | p{0.2cm}  p{0.2cm} p{0.2cm} p{0.2cm} p{0.2cm} p{0.2cm} p{0.2cm} }
& &	j &	1&	2&	3&	4& 5&	6\\
\multicolumn{2}{c|}{H} &	&  a&	a&	b&	b&	b&	a\\\hline
i &	&&	0&	0&	0&	0&\cellcolor[gray]{0.9}	0&\cellcolor[gray]{0.6}	0\\
1 & a& &		1&	0&	0&\cellcolor[gray]{0.9}	0&\cellcolor[gray]{0.6}	0&	0\\
2 & b	&&	1&	0&\cellcolor[gray]{0.9}	1&\cellcolor[gray]{0.6}	0&	0&	0\\
3& b	&&	1&	\cellcolor[gray]{0.9}0&\cellcolor[gray]{0.6}	1&	1&	0&	0\\
4 & a	&&\cellcolor[gray]{0.9}	1&\cellcolor[gray]{0.6}	1&	0&	1&	0&	1\\
5 & b	&&\cellcolor[gray]{0.6}	1&	1&	1&	0&	1&	0\\
\end{tabular}
&

&
\begin{tabular}{p{0.3cm} p{0.3cm} | p{0.2cm}  p{0.2cm} p{0.2cm} p{0.2cm} p{0.2cm} p{0.2cm} p{0.2cm} }
& &	j &	1&	2&	3&	4& 5&	6\\
\multicolumn{2}{c|}{V} &	&  a&	a&	b&	b&	b&	a\\\hline
i &	&&	&	&	&	&	&	\\
1 & a& 0&		1&	1&	1&\cellcolor[gray]{0.9}	1&	1&	1\\
2 & b	&0&	0&	0&\cellcolor[gray]{0.9}	1&	1&	1&	1\\
3& b	&0&	0&\cellcolor[gray]{0.9}	0&	0&	1&	1&	1\\
4 & a	&0&\cellcolor[gray]{0.9}	0&	1&	0&	0&	0&	1\\
5 & b	&\cellcolor[gray]{0.9}0&	0&	0&	1&	0&	1&	0\\\end{tabular}
\end{tabular}
\normalsize
\end{center}
\caption[Example of dynamic programming tables for Longest Common Subsequence]{Dynamic programming tables for the LCS and horizontal and vertical differences for $X=abbab$ and $Y=aabbba$.}
\label{fig:lcs}
\end{figure*}

Let $V$ and $H$ denote the tables of vertical and horizontal differences of values in $C$, respectively. Entries in these tables are defined as $V_{i,j}=c_{i,j}-c_{i-1,j}$ and $H_{i,j}=c_{i,j}-c_{i,j-1}$ for
$1\le i\le m$ and $1\le j\le n$. Fig.~\ref{fig:lcs} shows the tables $C$, $V$, and $H$ for an example pair of input sequences. 
We adapt Corollary 1
in~\cite{masek80} for the computation of $V$ and $H$:

\begin{restatable}[]{proposition}{lcsprop}
Let $[x_i=y_j]=1$ if $x_i=y_j$ and 0 otherwise. Then,
$V_{i,j}=\max\{[x_i=y_j]-H_{i-1,j},0,V_{i,j-1}-H_{i-1,j}\}$ and
$H_{i,j}=\max\{[x_i=y_j]-V_{i,j-1},0,\\ H_{i-1,j}-V_{i,j-1}\}$.\looseness=-1
\label{prop:lcs}
\end{restatable}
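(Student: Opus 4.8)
The plan is to reduce the proposition to the recurrence~(\ref{eqn:lcs_rec}) for $c_{i,j}$ together with the elementary monotonicity of the LCS table, and then to verify the two claimed identities by a short case split on whether $x_i=y_j$. No new machinery beyond the definition of $c_{i,j}$ as a maximum length is required.

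First I would record the facts about the four cells surrounding position $(i,j)$. Write $a=c_{i-1,j-1}$, $b=c_{i-1,j}$, $c=c_{i,j-1}$ and $d=c_{i,j}$. From the definitions of the difference tables, $V_{i,j}=d-b$, $H_{i,j}=d-c$, $V_{i,j-1}=c-a$ and $H_{i-1,j}=b-a$, so each side of the claimed identities is a function of $a,b,c,d$ alone. Substituting these into the statement, the first identity becomes $d-b=\max\{[x_i=y_j]+a-b,\,0,\,c-b\}$ and the second becomes $d-c=\max\{[x_i=y_j]+a-c,\,0,\,b-c\}$.

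Next I would establish the four monotonicity inequalities $a\le b$, $a\le c$, $b\le d$ and $c\le d$, each of which holds because any common subsequence of a pair of prefixes remains one for a longer pair. These are exactly the inequalities that decide which argument of each three-way maximum is active. The heart of the argument is the case split driven by~(\ref{eqn:lcs_rec}). When $x_i=y_j$ we have $d=a+1$, so $b\le d$ and $c\le d$ force $a+1-b\ge\max\{0,c-b\}$ and $a+1-c\ge\max\{0,b-c\}$; since $[x_i=y_j]=1$, the first argument of each maximum equals $a+1-b$ and $a+1-c$ respectively, and it wins, matching $d-b$ and $d-c$. When $x_i\ne y_j$ we have $d=\max\{b,c\}$, hence $d-b=\max\{0,c-b\}$ and $d-c=\max\{0,b-c\}$; here $[x_i=y_j]=0$, so the first arguments are $a-b$ and $a-c$, and the inequalities $a\le b$, $a\le c$ make these $\le 0$, so they are discarded and the two remaining arguments reproduce $d-b$ and $d-c$.

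I expect the only delicate point to be the bookkeeping in the case analysis: one must invoke exactly the right inequality to discard one argument of each maximum. Observe that the two identities are mirror images of one another under the exchange $b\leftrightarrow c$ (equivalently, transposing the two input strings), so I would prove the identity for $V_{i,j}$ in full and obtain the one for $H_{i,j}$ by this symmetry. The argument uses only the four monotonicity inequalities above; together with the bounded-increment facts $d\le b+1$ and $d\le c+1$ they also confirm $V_{i,j},H_{i,j}\in\{0,1\}$, the property that makes the difference encoding worthwhile, but that bound plays no role in establishing the recurrences themselves.
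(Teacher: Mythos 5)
Your proof is correct and takes essentially the same route as the paper's: a direct case split on whether $x_i=y_j$, using the recurrence for $c_{i,j}$ to determine which argument of each three-way maximum is active (the first when $x_i=y_j$, the second or third otherwise). The only cosmetic difference is that you certify the dominant term via monotonicity of the raw table (your $a\le b\le d$ and $a\le c\le d$), whereas the paper invokes the equivalent bound $0\le H_{i,j},V_{i,j}\le 1$.
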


\begin{proof}
Directly from Recurrence~(\ref{eqn:lcs_rec}) we obtain
$V_{i,j}=1-H_{i-1,j}$ if $x_i=y_j$ and
$V_{i,j}=\max\{0,V_{i,j-1}-H_{i-1,j}\}$ otherwise. Similarly,
$H_{i,j}=1-V_{i,j-1}$ if $x_i=y_j$ and
$H_{i,j}=\max\{0,H_{i-1,j}-V_{i,j-1}\}$ otherwise. It is easy to
verify from the definition of longest common subsequence and
Recurrence~(\ref{eqn:lcs_rec}) that $0\le H_{i,j}\le 1$ and $0\le
V_{i,j}\le 1$ for all $i,j$, which implies that the maximum in
$\max\{[x_i=y_j]-H_{i-1,j},0,V_{i,j-1}-H_{i-1,j}\}$ and
$\max\{[x_i=y_j]-V_{i,j-1},0,H_{i-1,j}-V_{i,j-1}\}$ is equal to the
first term if $x_i=y_j$ and to the second or third terms otherwise.\qed
\end{proof}

We compute tables $H$ and $V$ according to Proposition~\ref{prop:lcs}
diagonal by diagonal using bit parallelism in the wide word. Assume an
alphabet $\Sigma=\{0,1,2,\ldots,\sigma-1\}$ with $\lceil\log
\sigma\rceil\le w-1$. Although all entries in tables $H$ and $V$ are
either 0 or 1, we will use fields of $O(\log \sigma)$ bits to store
these values, since we can only compare at most $w^2/\log \sigma$    
symbols simultaneously in the wide word. We divide the wide word $W$
in $f$-bit fields with $f=\max(\lceil\log \sigma\rceil,2)+1$. Each
field will be used to store both symbols and intermediate results for
the computation of the diagonals of $H$ and $V$, plus an additional bit to
serve as a test bit in order to implement fieldwise comparisons as
described in Appendix~\ref{app:subroutines}. 
We require at least 3 bits
because although all entries in tables $H$ and $V$ use one bit,
intermediate results in calculations can result in values of -1. Thus,
we require 2 bits to represent values -1, 0, and 1, and a test or
sentinel bit to prevent carry bits resulting from subtractions to
interfere with neighboring fields. We represent -1 in two's
complement. It is not hard to extend the techniques for comparisons
and maxima to the case of positive and negative
numbers~\cite{hagerup98}.

Let $H_k$ and $V_k$ denote the $k$-th diagonal of $H$ and $V$,
respectively, i.e., $H_k=\{H_{i,j}| i+j=k\}$ and $V_k=\{V_{i,j}|
i+j=k\}$. Consider table $H$. We will operate with each diagonal $H_k$
using $\lceil |H_k|/\ell \rceil$ wide words, where $\ell=\lfloor w^2/f \rfloor$. Let
$f_0,\ldots,f_{\ell-1}$ denote the fields within a wide word in increasing
order of bit significance. In each wide word, cells of
$H_k$ will be stored in increasing order of column, i.e., if $H_{i,j}$
is stored in field $f_r$, then $f_{r+1}$ stores $H_{i-1,j+1}$. In
order to compute each diagonal we must compare the relevant entries of
strings $X$ and $Y$. We assume that each symbol of $X$ and $Y$ is
stored using $\lceil\log \sigma\rceil+1$ bits (including the test
bit) and that $X$ is stored in reverse order. $X$ and $Y$ can be
preprocessed in $O(m+n)$ to arrange this representation, which will
allow us to do constant-time parallel comparisons of symbols for each diagonal
loading contiguous words of memory in wide words.

Consider a diagonal $H_k$. Assume that the entire diagonal fits in a
word $W$. This will not be the case for most diagonals, but we
describe the former case for simplicity. The latter case is
implemented as a sequence of steps updating portions of the
diagonal that fit in a wide word. We update the entries of $H_k$ as follows:

\begin{enumerate}

\item We load the
symbols of the relevant substrings of $X$ and $Y$ into words $W_X$ and
$W_Y$, with the substring of $X$ in reverse order. More specifically,
for a diagonal $k$, $W_Y=y_{j_1}y_{{j_1}+1}\ldots y_{j_2}$, where
$j_1=k-\min(|X|,k-1)$ and $j_2=\min(|Y|,k)$, 
and $W_X=x_{i_2}x_{i_2-1}\ldots x_{i_1}$ with $i_2=k-j_1$ and
$i_1=k-j_2$. We subtract $W_Y$ from $W_X$, mask out all non-zero
results and write a 1 in each field that resulted in 0. We store the
resulting word in $W_{eq}$, where each field corresponding to a cell
$(i,j)$ stores a 1 if $x_i=y_j$ and a 0 otherwise (this can be implemented
through comparisons as described in 
Appendix~\ref{app:subroutines}). 

\item
We load $V_{k-1}$ into a word $W_V$ and subtract it from $W_{eq}$ to
obtain $[a_i=b_j]-V_{i,j-1}$ for all $i,j$ in $H_k$ simultaneously and
store the result in $W_1$. \item 

We load $H_{k-1}$ into a word $W_H$ and
subtract $W_V$ from it to obtain $H_{i-1,j}-V_{i,j-1}$ for all $i,j$ in
$H_k$, storing the result in $W_2$. 

\item Finally, using fieldwise
comparisons, we obtain the fieldwise maximum of $W_1,W_2$ and the
word $\vec{0}$. The resulting word is $H_k$.

\end{enumerate}

All the operations described above can be implemented in constant time.  
The procedure to compute
$V_k$ is analogous.  Note that the entries corresponding to base cases
in the first row and column in the LCS table correspond to the base
cases of the horizontal and vertical vectors, respectively. When
computing diagonals $H_k$ with $k\le n+1$ and $V_k$ with $k\le m+1$,
the entries corresponding to base cases are not computed from previous
diagonals but should be added appropriately at the end of $H_k$ and
beginning of $V_k$. Example~\ref{ex:lcs}  
 shows how to compute $H_6$ from $H_5$ and $V_5$ (in gray) in Fig.~\ref{fig:lcs} with the above procedure.

\begin{example}
\label{ex:lcs}
Let $X=abbab$ and $Y=aabbba$ be two strings. Fig.~\ref{fig:lcs}
shows the entries of the dynamic programming table for computing the
LCS of $X$ and $Y$, as well as the values of horizontal and vertical
differences.

In this example $\sigma=2$, thus we use one bit for each symbol
(`a'=0, `b'=1), but we use $f=3$ bits per field. Consider the diagonal
$H_6$ in table $H$ (in dark gray). We now illustrate how to obtain $H_6$ from $H_5$ and
$V_5$ (in light gray). In what follows we represent the
number in each field in decimal and do not include the details of
fieldwise comparison and maxima.

\begin{center}
\small
\begin{tabular}{p{2.5cm} p{0.2cm} l  p{5cm}}
$W_X$ & = &  1 \ 0 \ 1 \ 1 \ 0 &(=$x_5x_4x_3x_2x_1$)\\
$W_Y$ & = & 0 \ 0 \ 1 \ 1 \ 1  &(=$y_1y_2y_3y_4y_5$)\\
$W_{eq}$ & =&   0 \ 1 \ 1 \ 1 \ 0  &($W_{eq}[f(j-1)]=1 \Leftrightarrow x_{|H_5|-j}=y_j$)\\
$V_5$ &= &  0 \ 0 \ 0 \ 1 \ 1\\
$W_1=W_{eq}-V_5$& = &  0 \ 0 \ 1 \ 0 -1\\ \hline
$H_5$ & = & 1  \ 0 \  1 \ 0  \ 0&\\
$W_2=H_5-V_5$ & =& 1 \ 0 \ 1 -1 -1\\\hline
$\max\{W_1,W_2,\vec{0}\}$ &=  & 1 \ 1 \ 1 \ 0 \ 0 \\
$H_6$ &= & 1 \ 1 \ 1 \ 0 \ 0\ 0 & (last 0 is the base case)
\end{tabular}
\normalsize
\end{center}
\end{example}

Once all diagonals are computed, the final length of the longest
common subsequence of $X$ and $Y$ can be simply computed by
(sequentially) adding the values of the last row of $H$ or the values
of last column of $V$ (which can be done while computing $H$ and
$V$). The entire procedure is described in Algorithm~\ref{alg:lcs} 
and leads to Theorem~\ref{thm:lcs1}: 

\lcsa*

\begin{proof}
A diagonal of $H$ and $V$ of length $\ell$ entries can be computed in
time $O(\ell\log \sigma/w^2+1)$. Adding this time over all $m+n$
diagonals yields the total time. For the space, each diagonal is
represented in $\lceil \ell f/w^2 \rceil$ wide words, where $f=O(\log
\sigma)$ is the number of bits per field. Since we can compute each
diagonal $H_k$ and $V_k$ using only $H_{k-1}$ and $V_{k-1}$, we only
need to store 4 diagonals at any given time. Since the maximum length
of a diagonal is $\min(n,m)+1$ and each wide word can be stored in $w$
regular words of memory, the result follows.\qed
\end{proof}

\begin{algorithm}[t!]
\caption[UW-RAM LCS-length]{LCS-length($X,Y,m=|X|,n=|Y|,\sigma$)}
\begin{algorithmic}[1]
			\STATE $f\gets\max(\lceil\log \sigma\rceil,2)+1$ \COMMENT{field length in bits}
			\STATE $H_1^1\gets \vec{0}$ \COMMENT{$H_{0,1}=0$}
			\STATE $V_1^1\gets \vec{0}$ \COMMENT{$V_{1,0}=0$}
			\STATE $\mathrm{length} \gets 0$ \COMMENT{length of longest common subsequence}
			\FOR{$k=2$ to $m+n$}
					\STATE $\ell \gets \min(n,k-1)+\min(m,k-1)-k+1$ \COMMENT {length of diagonal} 
					\STATE $j_1 \gets k-\min(m,k-1)$ \COMMENT{indices of relevant substrings of $X$ and $Y$}
					\STATE $j_2 \gets \min(n,k)$
					\STATE $i_2 \gets k-j_1$
					\STATE $i_1 \gets k-j_2$
					\STATE $j\gets j_1$
					\STATE $i\gets i_2$
					\STATE $s \gets \lceil\ell f/w^2\rceil$ \COMMENT {number of wide words per diagonal}
					
					\FOR{$t=1$ to $s$}
						\STATE $j' \gets \min(j+s-1,j_2)$
						\STATE $i' \gets \max(i+s-1,i_1)$
						\STATE $W_Y \gets Y[j..j']$
						\STATE $W_X\gets X[i..i']$ \COMMENT{substring of $X$ is in reverse order}
						\STATE $W_{eq}\gets $equal$(W_X,W_Y)$
						\STATE $W_1\gets W_{eq}-V_{k-1}^t$
						\STATE $W_2\gets H_{k-1}^t-V_{k-1}^t$
						\STATE $H_k^t\gets \max(W_1,W_2,\vec{0})$ \COMMENT{base case is implicitly added at rightmost field}
						\STATE $W_1\gets W_{eq}-H_{k-1}^t$
						\STATE $W_2\gets V_{k-1}^t-H_{k-1}^t$ 
						\STATE $V_k^t\gets \max(W_1,W_2,\vec{0})$ 
						\IF{$t=1$ AND $k\le m+1$}
						\STATE $V_k^t\gets V_k^t >> f$ \COMMENT{add 0 in the first field for the base case}
						\ENDIF
						\STATE $i\gets i'+1$
						\STATE $j\gets j'+1$
						\IF{$t=1$ AND $k\ge m+1$}
								\STATE $\mathrm{length} \gets \mathrm{length}+ H_k^1[0..f-1]$ \COMMENT{$\mathrm{length} = \mathrm{length} + H_{m,k-m}$}
						\ENDIF
					\ENDFOR
			\ENDFOR
			\RETURN $\mathrm{length}$
\end{algorithmic}
\label{alg:lcs}
\end{algorithm}

\subsubsection{Recovering a Longest Common Subsequence}
\label{sec:recovering_lcs}

It is known that given a dynamic programming table storing the values
of the LCS between strings $X$ and $Y$, one can recover the actual
subsequence by starting from $c_{m,n}$ and following the path through the
cells corresponding to the values used when computing each value
$c_{i,j}$ according to Recurrence~(\ref{eqn:lcs_rec}): if $x_i=y_j$,
then we add $x_i$ to the LCS and continue with cell
$(i-1,j-1)$; otherwise the path follows the cell corresponding to the
maximum of $c_{i-1,j}$ or $c_{i,j-1}$. Although
Algorithm~\ref{alg:lcs} does not compute the actual LCS table, a path
of an LCS can be easily computed using tables $H$ and $V$. The path
starts at cell $(m,n)$ (of either table). Then, to continue from a cell
$(i,j)$, if $x_i=y_j$, then $x_i$ is part of the LCS, and we continue
with cell $(i-1,j-1)$; otherwise, if $H_{i,j}=1$ and $V_{i,j}=0$, then
we continue with cell $(i-1,j)$, and if $H_{i,j}=0$ and $V_{i,j}=1$, we
continue with cell $(i,j-1)$ (and with any of the two if
$H_{i,j}=V_{i,j}=0$). This can be easily done in $O(m+n)$ time if all
diagonals of tables $V$ and $H$ are kept in memory while computing the
LCS length in Algorithm~\ref{alg:lcs}. This would require
Algorithm~\ref{alg:lcs} to use $O(nmw/\log \sigma)$ words of memory to
store all diagonals.

\subsubsection{Four Russians Technique}
\label{sec:four_russians_lcs}

The computation of the longest common subsequence in the \uwram~can be made even
faster by combining the diagonal-by-diagonal order of computation
described above with the Four Russians technique.  The Four Russians
technique~\cite{fourRussians70} was used 
by Masek and Paterson to speedup the computation of the string edit problem (and also the LCS) in a \ram~with indirect addressing~\cite{masek80}.
The technique consists of dividing the dynamic programming table in blocks of size
$t\times t$ cells. In a precomputation phase, all possible blocks are
computed and stored as a data structure indexed by the first row and
column of each block. The LCS can be then computed by looking up
relevant values of the table one block at a time using the data
structure. In a \ram~with indirect addressing and under a suitable
value of $t$, the last row and column of a block can be obtained by
looking up the entry corresponding to the first row and column of that
block in constant time. This technique yields a speedup of $O(t^2)$
with respect to computing all cells in the table, for a total time of
$O(n^2/t^2)$ (for two strings of length $n$) plus the time for the
precomputation of all blocks. By setting $t=O(\log n)$ (for a constant alphabet size) and encoding
the table with difference vectors, the precomputation time can be
absorbed by the time to compute the main table
(see~\cite{masek80,gusfield97} for a more detailed description of the
technique).

We can use the power of parallel memory accesses of the \uwram~to
speedup the computation of the LCS even further by looking up blocks
in parallel, in a similar fashion to the diagonal-by-diagonal approach
described above. For simplicity, assume $m=n$. Using the same encoding
for $H$ and $V$, we first precompute all possible blocks of $H$ and
$V$ of size $t\times t$. Since a block is completely determined by its
first column and row, whose values are in $\{0,1\}$, and the two
substrings of length $t$ (over an alphabet of size $\sigma)$, there
are $O((2\sigma)^{2t})$ possible blocks. Note that we can encode each
cell now with one bit, since we do not need to do symbol comparisons
in parallel. Each block can be computed in $O(t^2)$ time with the
standard sequential algorithm, so the precomputation time is
$O((2\sigma)^{2t}t^2)$. We set $t=\log_{2\sigma}n/2$, and thus the
precomputation time is $O(n\log^2n)$~\cite{gusfield97}. Since $t\le
w/2$, we can use each block of the wide word to lookup the entry for
each block by using a parallel lookup operation. Thus, as described
previously, we can compute tables $H$ and $V$ in diagonals of blocks,
computing $\min(\ell,w)$ blocks simultaneously in a diagonal of length
$\ell$ blocks. There are $(n/t)^2$ blocks to compute and the critical
path of the table has length $n/t$ blocks. Therefore, the computation
of $H$ and $V$ can be carried out in time
$O(n^2/(t^2w)+n/t)=O(n^2\log^2\sigma/w^3+n\log\sigma/w)$, since
$t=\Theta(w/\log \sigma)$. This result is summarized by Theorem~\ref{thm:lcs2}:

\lcsb*


\end{FULL}

\section{String Searching}
\label{sec:string_search}
Another example of a problem where a large class of algorithms can be
sped up in the \uwram~is string searching. Given a text $T$ of length
$n$ and a pattern $P$ of length $m$, both over an alphabet $\Sigma$,
string searching consists of reporting all the occurrences
of $P$ in $T$. \begin{FULL}We focus here on on-line
searching, this is, with no preprocessing of the text (though
preprocessing of the pattern is allowed), and we assume in general that
$n\gg m$.\end{FULL}
\begin{SHORT} We assume in general that $n\gg m$.\end{SHORT}
We use two classic algorithms for this problem to illustrate
different ways of obtaining speedups via parallel operations in the
wide word. More specifically, we obtain speedups of $w=\Omega(\log n)$ for \uwram~implementations of the Shift-And and Shift-Or
algorithms~\cite{by92,wu92}, and the Boyer-Moore-Horspool
algorithm~\cite{horspool80}. 
\begin{FULL}For a string $S$, let $S[i]$ denote its
$i$-th character, and let $S[i..j]$ be the substring of $S$
from position $i$ to $j$. 
Indices start at 1. \looseness=-1
\end{FULL}

\begin{SHORT}
\paragraph{\textbf{Shift-And and Shift-Or}}
These algorithms simulate an $(m+1)$-state
non-deterministic automaton that recognizes $P$ starting from every
position of $T$.
\end{SHORT}
\begin{FULL}
\subsection{Shift-And and Shift-Or}
\label{sec:shift_and_or}
The Shift-And and Shift-Or algorithms keep a sliding window of length
$m$ over the text $T$. On a window at substring
$T[i-m+1..i]$, the algorithms keep track of all prefixes of $P$ that
match a suffix of $T[i-m+1..i]$. Thus, if at any time there is one
such prefix of length $|P|$, then an occurrence is reported at
$T[i-m+1]$. This is equivalent to running the $(m+1)$-state
non-deterministic automaton that recognizes $P$ starting from every
position of $T$. 
\end{FULL}
For a window $T[i-m+1..i]$ in $T$, the $j$-th state
of the automaton $(0\le j\le m)$ is active if and only if $P[1..j]=T[i-j+1..i]$. These
algorithms represent the automaton as a bit vector and update the
active states using bit-parallelism.  Their running time is $O(mn/w+n)$, achieving linear time on the size of the text for small patterns.
\begin{FULL} 
More specifically, the Shift-And
algorithm keeps a bit vector $\vec{v}=b_1b_2\ldots b_{m}$, where
$b_j=1$ whenever the $j$-th state is active. If $\vec{v}_i$ represents
the automaton for the window ending at $T[i]$, then
$\vec{v}_{i+1}=((\vec{v}_i>>1)\ | \ 1)\ \& \ Y[T[i+1]]$, where
$Y[\sigma]$ is a bit vector with set bits in the positions of the
occurrences of $\sigma$ in $P$. The OR with a 1 corresponds to the initial state always being active to allow a match to start at any
position. The Shift-Or algorithm is similar but it saves this
operation by representing active states with zeros instead of ones.

\end{FULL}
We describe in \begin{SHORT} the full version~\cite{uwram_full} \end{SHORT} two \uwram~algorithms for Shift-And that illustrate different techniques, noting that the \uwram~implementation of
Shift-Or is analogous. We obtain the following theorem:

\begin{restatable}[]{theorem}{shiftand}
\label{thm:shiftand}
Given a text $T$ of length $n$ and a pattern $P$ of length $m$, we can find the $occ$ occurrences of $P$ in $T$ in the \uwram~in time $O(nm/w^2+n/w+occ)$.
\end{restatable}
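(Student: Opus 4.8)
The plan is to give two \uwram~implementations of Shift-And and to combine them according to the size of $m$. The first is a direct translation of the word-\ram~algorithm: store the $(m+1)$-bit automaton state vector $\vec{v}$ across $\lceil m/w^2\rceil$ wide words, and precompute each mask $Y[\sigma]$ in the same wide-word layout. Processing one text character is then a wide-word right shift by one (propagating the low bit across the $\lceil m/w^2\rceil$ words), an OR to reactivate the initial state, and an AND with $Y[T[i+1]]$; each touches $\lceil m/w^2\rceil$ wide words, so the whole text costs $O(nm/w^2+n)$. This already matches the claim when $m=\Omega(w^2)$, since then $nm/w^2=\Omega(n)\ge n/w$, so I would use it in that regime.

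The interesting case is $m\le w^2$, where the additive $n$ must be brought down to $n/w$ by processing many text positions in parallel. The key observation is that state bit $j$ depends only on the last $j\le m$ characters of $T$, so after $m-1$ update steps the state is correct regardless of the initial state. I therefore cut $T$ into $k$ chunks that overlap by $m-1$ characters, initialize each chunk's automaton arbitrarily, and treat the first $m-1$ outputs of each chunk as a warm-up to be discarded; this makes the chunks independent. I pack the $k$ chunk-automata into one wide word, giving each chunk $c=\lceil m/w\rceil$ whole (optionally power-of-two-many) blocks, so $k=\lfloor w/c\rfloor$ chunks fit and every chunk is block-aligned. A single synchronous step advances all chunks by one character: a wide-word right shift by one, an AND with a fixed mask that clears the inter-chunk boundary bits (so no chunk leaks into its neighbour), an OR with a fixed constant that reactivates the initial state of every chunk at once, and finally an AND with a wide word holding the $k$ masks $Y[\cdot]$ for the current characters.

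Assembling those $k$ masks in $O(1)$ time is where the model's power enters, via the content-access (parallel-lookup) operation. I maintain an address wide word $D$ whose blocks for chunk $\kappa$ all hold the \ram~address of $\kappa$'s current character; since a chunk's characters are contiguous in $T$, I advance every chunk by adding the fixed constant $\sum_j 2^{jw}$ to $D$ (no carry escapes a block because each address is at most $n<2^w$). One content access through $D$ loads the current character of each chunk into all $c$ of its blocks; a fixed per-block shift and the addition of a fixed within-chunk offset then turn each block into the address of the correct word of $Y[\cdot]$ inside a table storing $c$ consecutive words per symbol; a second content access fetches the mask words into their blocks, after which $O(1)$ \comp/\sprd~and shift operations finish the layout. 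Each step is thus $O(1)$, and since every chunk has length $O(n/k+m)$ the total is $O(n/k+m+occ)=O(n\lceil m/w\rceil/w+m+occ)=O(nm/w^2+n/w+m+occ)$; occurrences are reported by testing the accepting-state bit of every chunk each step with one masked comparison and extracting positions only when a match is present, which contributes the additive $occ$. Under the standing assumption $n\gg m$ (so $m=O(n/w)$) the $+m$ term is absorbed, yielding the stated $O(nm/w^2+n/w+occ)$.

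I expect the main obstacles to be exactly the two bookkeeping points above: (i) choosing the boundary mask and the reactivation constant so that the global shift does not corrupt neighbouring chunks and the ``$\mid 1$'' is applied per chunk, and (ii) arranging the two content accesses and the address increment so that each chunk's multi-block mask $Y[\cdot]$ is delivered to its own blocks in constant time. Both are mechanical once the block-aligned chunk layout is fixed, but they carry the real content of the proof; the correctness of discarding the $m-1$ warm-up characters per chunk must also be argued carefully from the fact that state bit $j$ is a function of only the last $j$ characters of the text.
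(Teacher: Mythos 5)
Your proposal is correct and matches the paper's proof in essentials: the paper likewise gives two implementations --- a direct $w^2$-bit automaton running in $O(nm/w^2+n)$ time, and a packing of many per-block automata advanced synchronously, using two \emph{read\_content} accesses per step (one to fetch each chunk's current character, one to look up the $Y[\cdot]$ masks) over text chunks that overlap by $m-1$ characters, giving $O(nm/w^2+n/w+occ)$. The differences are cosmetic: the paper extends each chunk $m-1$ characters past its end (with padding) rather than warming up at the start, it dismisses the $m>w$ packing (your $c=\lceil m/w\rceil$ blocks per chunk) as a straightforward modification that you instead work out explicitly, and your set-bit-position extraction for reporting matches in time proportional to $occ$ is exactly the lookup-table device the paper uses.
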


\begin{FULL}

\begin{algorithm}[t!]
\caption[UW-RAM Shift-And]{Shift-And($T,P,n=|T|,m=|P|,\Sigma$)}
\begin{algorithmic}[1]
			\STATE \COMMENT{Preprocessing}
			\FOR{each $\sigma \in \Sigma$}
					\STATE $Y[\sigma] \gets \vec{0}$
			\ENDFOR
			\FOR{$j=1$ to $m$}
					\STATE $Y[P[j]] \gets Y[P[j]]\ |\ (1>>(j-1))$
			\ENDFOR
			\STATE \COMMENT{Search}
			\STATE $V \gets \vec{0}$
			\STATE $C \gets 1>>(m-1)$
			\FOR{$i=1$ to $n$}
					\STATE $V=((V>>1)\ |\ 1)\ \& \ Y[T[i]]$
					\IF{$V\ \&\ C\ \neq 0$}
						\STATE report an occurrence at $i-m+1$
					\ENDIF
			\ENDFOR
\end{algorithmic}
\label{alg:shift_and1}
\end{algorithm}

\begin{algorithm}[t!]
\caption[UW-RAM Parallel Shift-And]{Parallel Shift-And($T,P,n=|T|,m=|P|,\Sigma$). 
For technical reasons, assume that $T[n+j]=\$$ for $j=1,\ldots,m-1$, with
  $\$\notin \Sigma$, and that $w\ge \log(n+m)$. In order to report
  matches at each step in time proportional to the number of matches
  (and not the number of blocks), we move directly to blocks with
  matching positions by using a function that for every word of length
  $w$ returns an array $A$ with the positions of set bits. For
  example, for $w=5$ and $x=01011$, $A=[1,3,4]$. We do this by table
  look up to a table with $(w/2)$-bit entries, whose space is
  $O(2^{w/2}w)$ words, which for $w=\log n$ is $O(\sqrt{n}\log
    n)$.}
\begin{algorithmic}[1]
			\STATE \COMMENT{Preprocessing}
			\FOR{each $\sigma \in \Sigma$}
					\STATE $Y[\sigma] \gets 0$ \COMMENT{$|Y[\sigma]|=w$}
			\ENDFOR
			\FOR{$j=1$ to $m$}
					\STATE $Y[P[j]] \gets Y[P[j]]\ |\ (1>>(j-1))$
			\ENDFOR
			\STATE $Y[\$]\gets 0$
			\STATE $V \gets \vec{0}$  
			\STATE $\mathrm{ONES} \gets \frac{2^{w^2}-1}{2^{w}-1}$ \COMMENT{$\mathrm{ONES}_j=1$ for all $j$}
			\STATE $C \gets \mathrm{ONES} >> (w-1)$ \COMMENT{$C_j=2^{w-1}$ for all $j$}
			\STATE \COMMENT{Search} 
			\STATE $n' \gets n/w$
			\STATE $\mathrm{POSNS}\gets \vec{0}$ \COMMENT{current positions in text}
			\FOR{$j=0$ to $w$} 
					\STATE $\mathrm{POSNS}\gets \mathrm{POSNS}\ |\ ((jn'+1)>>wj)$ 
			\ENDFOR
			
			\FOR{$i=1$ to $n'+m-1$}
					\STATE $V1 \gets (V>>1)\ |\ \mathrm{ONES}$
					\STATE $V2 \gets \mathrm{POSNS}$
					\STATE read\_content$(V2,T)$ \COMMENT{load characters in each position $(V2_j=T[\mathrm{POSNS}_j])$}
					\STATE read\_content$(V2,Y)$ \COMMENT{lookup masks in array $Y$ $(V2_j=Y[T[\mathrm{POSNS}_j]])$}
					\STATE $V\gets V1\ \&\ V2$
					\STATE $W \gets V\ \&\ C$ \COMMENT{check for matches at each block}
					\STATE $W \gets$ \comp$(W<<w-1)$
					\STATE $\mathrm{matches} \gets W_{0}$ \COMMENT{$\mathrm{matches}[j]=1$ if there was a match at block $j$}
					\STATE write\_word$(\mathrm{POSNS},\mathrm{matching\_positions})$ \COMMENT{write all current positions in array matching\_positions}
					\STATE $A \gets \mathrm{lookup}(\mathrm{matches})$ \COMMENT{position in $T$ of $k$-th matching block is at $\mathrm{matching\_positions}[A[k]]$}
					\FOR{$k=1$ to $|A|$}
						\STATE report match at $\mathrm{matching\_positions}[A[k]]$
					\ENDFOR
					\STATE $V \gets V\ \&\ \sim C$ \COMMENT{clear most significant bit in each block}
					\STATE $\mathrm{POSNS} \gets \mathrm{POSNS} + \mathrm{ONES}$ \COMMENT{update positions in $T$ ($\mathrm{POSNS}_j\le n+m-1$ for all $j$, thus there is no carry across blocks)}
			\ENDFOR
\end{algorithmic}
\label{alg:shift_and2}
\end{algorithm}

\subsubsection{$w^2$-bit Automaton} 
The straightforward way of taking advantage of the wide word when
implementing Shift-And is to use the entire wide word for bit
vectors. We first compute the mask array $Y[\sigma]$ for each
$\sigma\in \Sigma$ and store each $w^2$-bit vector in contiguous words
of memory starting at address $Y+\sigma$. Then the code of the \uwram~is essentially the same as the original code, replacing all references
to the array $Y$ with memory access operations for the wide word:
assuming $m\le w^2$, reading from and writing to $Y[\sigma]$ implemented
by read\_word$(W,Y+\sigma)$ and write\_word$(W,Y+\sigma)$, for
some word $W$. Otherwise, bit vectors are represented in $\lceil
m/w^2\rceil$ wide words (and stored in memory in $\lceil m/w^2\rceil
w$ words). The rest of the operations are done on registers, and
constants are part of the precomputation. The pseudocode for this
algorithm is shown in Algorithm~\ref{alg:shift_and1}, which assumes
$m\le w^2$ and is based on the pseudocode for Shift-And given
in~\citep[Chapter~2.2.2]{NRbook02}. Since we can now update $\vec{v}$ in
$O(m/w^2+1)$ time, the running time of Algorithm~\ref{alg:shift_and1}
is $O(nm/w^2+n)$. Thus, compared to the original algorithm, the \uwram~algorithm achieves a speedup of $w$ when $m\ge w^2$, and a speedup of
$\lceil m/w\rceil$ otherwise (no speedup is achieved for $m\le w$).

\begin{lemma}
When implemented in the \uwram, the Shift-And and Shift-Or algorithms for searching a pattern of length $m$ in a text of length $n$ have a running time of $O(nm/w^2+n)$, achieving a $w$-fold speedup over word-\ram~implementations when $m\ge w^2$.
\end{lemma}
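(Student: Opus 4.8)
The plan is to carry the standard \wordRAM~bit-parallel simulation of the automaton directly onto the wide word, storing the state vector $\vec{v}=b_1b_2\ldots b_m$ across the $w^2$ bits of a wide word (or across $\lceil m/w^2\rceil$ wide words when $m>w^2$) rather than across $\lceil m/w\rceil$ regular words. First I would precompute, for each $\sigma\in\Sigma$, the mask $Y[\sigma]$ as a $w^2$-bit vector laid out in $w$ contiguous memory words, so that it can be loaded into a wide word in constant time with a single word access (read\_word). The crucial observation is that the update recurrence $\vec{v}_{i+1}=((\vec{v}_i>>1)\ |\ 1)\ \&\ Y[T[i+1]]$ uses only a right shift, an OR with a constant, and a bitwise AND; since the wide-word \alu supports each of these on the full $w^2$-bit word in constant time, a single step of the automaton costs $O(1)$ when $m\le w^2$, and $O(\lceil m/w^2\rceil)$ in general.

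Next I would aggregate the cost. One automaton step is performed for each of the $n$ text positions, and after each step a match is detected in constant time by ANDing $\vec{v}$ with the mask $C$ that selects the final state. Hence the total running time is $n\cdot O(\lceil m/w^2\rceil)=O(nm/w^2+n)$, and the same layout and update (with active states encoded by zeros and the complementary masks) gives the identical bound for Shift-Or. For the speedup claim I would compare against the $O(nm/w+n)$ \wordRAM~bound: when $m\ge w^2$ the first term dominates in both models, so the ratio is $(nm/w)/(nm/w^2)=w$, yielding the stated $w$-fold speedup.

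The step I expect to require the most care is the $m>w^2$ case, where $\vec{v}$ and each mask $Y[\sigma]$ span several wide words. Here the right shift must propagate the bit that falls off one wide word into the next, and the OR with $1$ is applied only to the lowest-order word, exactly as in the multi-word \wordRAM~implementation; I would handle this by processing the $\lceil m/w^2\rceil$ wide words in increasing order and shifting in the carried bit from the previous word. Since this cross-boundary bookkeeping is structurally identical to the one already used for multi-word state vectors in the \wordRAM, it contributes only the $\lceil m/w^2\rceil$ factor already accounted for and does not change the asymptotics.
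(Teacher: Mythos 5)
Your proposal is correct and matches the paper's own argument for this lemma (the $w^2$-bit automaton implementation): the state vector and each mask $Y[\sigma]$ are laid out across the wide word and loaded via read\_word in constant time, each automaton step costs $O(\lceil m/w^2\rceil)$ using only shift, OR, and AND on the full wide word, giving $O(nm/w^2+n)$ overall and a $w$-fold speedup precisely when $m\ge w^2$. Your extra care with carry propagation across wide-word boundaries in the $m>w^2$ case is a detail the paper leaves implicit (it merely notes the vectors occupy $\lceil m/w^2\rceil$ wide words), but it is handled exactly as you describe and does not change the approach.
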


\subsubsection{$w$-bit Parallel Automata} 
Another way of using the wide word to speedup the Shift-And algorithm
is to take advantage of the parallel memory access operations of the
\uwram~to perform $w$ parallel searches on disjoint portions of the
text. This is done by using each block of a wide word to represent the
automaton in each search: block $j$ is used to search $P$ in
$T[jn/w..(j+1)n/w-1]$, for $0\le j\le w-1$ (we assume $w$ divides
$n$). Since the operations involved in updating the automata are the
same across blocks, an update to all $w$ automata can be done with a
constant number of single wide word operations. All bit vectors of the
precomputed table $Y$ are now again $w$-bit long, as in the original
algorithm. In each step of the search, $w$ entries of $Y$ are read in
parallel to each block according to the current character in $T$ in
the search in each portion. The pseudocode for this procedure is shown
in Algorithm~\ref{alg:shift_and2}. The code assumes $m\le w$, though
it is straightforward to modify it for the $m>w$ case. The running
time of this algorithm is now $O(nm/w^2+n/w+occ)$, where $occ$ is the
number of occurrences found. This is asymptotically faster than the first
version above, and it leads to Theorem~\ref{thm:shiftand}.

\end{FULL}

\begin{SHORT}
\paragraph{\textbf{Boyer-Moore-Horspool}}
\label{sec:bmh}
\bmh~\cite{horspool80} keeps a sliding window of length $m$ over the
\end{SHORT}
\begin{FULL}
\subsection{Boyer-Moore-Horspool}
\bmh~\cite{horspool80} keeps a sliding window of length $m$ over the
\label{sec:bmh}
\end{FULL}
text $T$ and searches backwards in the window for matching suffixes of
both the window and the pattern. \begin{FULL}More specifically, for a window
$T[i..i+m-1]$, the algorithm checks if $T[i+j-1]=P[j]$ starting with
$j=m$ and decrementing $j$ until either $j=0$ (there is a match) or a
mismatch is found. Either way, the window is then shifted so that
$T[i+m-1]$ is aligned with the last occurrence of this character in
$P$ (not counting $P[m]$).\end{FULL} The worst case running time of \bmh is
$O(nm)$ (when the entire window is checked for all window positions)
but on average the window can be shifted by more than one character,
making the running time $O(n)$~\cite{BaezaYates92}. 
In the \uwram, we can take
advantage of the wide word to make several character comparisons in
parallel, thus achieving a $w$-fold speedup over the worst case behaviour of \bmh. 
\begin{SHORT} Full details are described in~\cite{uwram_full}.\looseness=-1\end{SHORT}
\begin{FULL}
A recent SIMD-based implementation of \bmh using SSE4.2 on Intel i5 and Xeon processors~\cite{ladra12} is evidence of the practicality of this approach.
\end{FULL}

\begin{FULL}
%
First, we divide each wide word in $f$-bit fields so
that each field contains one character, thus $f=\lceil\log \sigma\rceil$. At
each position of the window, we do a field-wise comparison between a
wide word containing the characters of the text and one containing the
characters of the pattern. We do this simply by subtracting both
words. Since we only care if all symbols in the words match, we only
need to check if the result is zero, without having to worry about
carries crossing fields (and hence we do not need a test bit). We
shift the window to the next position if the result is not zero. Note
that this check can be done in constant time, and it is quite simple as
we do not need to identify where there was a mismatch. Thus in each
window we can compare up to $w^2/f$ symbols in parallel, and hence the
running time in the worst case becomes $O(mn\log \sigma/w^2+1)$. We show
the pseudocode in Algorithm~\ref{alg:bmh} which, again, is based on
the pseudocode of this algorithm presented
in~\citep[Chapter~2.3.2]{NRbook02}. Note that for a given input the
distance of the shifts is exactly the same as in the original version
of the algorithm, and therefore the average running time remains the
same. Note as well that the average running time can be reduced by
using each block to search in disjoint parts of the text at the
expense of increasing the worst case time to $O(mn\log \sigma/w+1)$ due
to the reduction in the number of characters that can be compared
simultaneously.
\end{FULL}

\begin{restatable}[]{theorem}{bmh}
Given $T$ of length $n$ and $P$ of length $m$ over an alphabet of size $\sigma$, we can find the occurrences of $P$ in $T$ with a \uwram~implementation of BMH in $O(mn\log \sigma/w^2+1)$ time in the worst-case and $O(n)$ time on average.\looseness=-1
\end{restatable}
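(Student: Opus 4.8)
The plan is to reproduce the control flow of sequential \bmh exactly, but to collapse its backward character-by-character scan of each window into a constant (or near-constant) number of wide-word operations. First I would fix a field width $f=\lceil\log\sigma\rceil$ so that a single character occupies one field and $\lfloor w^2/f\rfloor$ characters fit in one wide word. As preprocessing I would (i) rewrite $T$ and $P$ into this $f$-bits-per-character layout, and (ii) build the standard bad-character shift table, where the shift for a character aligned with $P[m]$ is $m$ minus the position of that character's last occurrence in $P[1..m-1]$ (or $m$ if it does not occur). This costs $O(n+m+\sigma)$ time and is independent of the comparison phase.

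The core step is the window comparison. At a window aligned at $T[i]$ I would load the window characters $T[i..i+m-1]$ into $\lceil mf/w^2\rceil$ wide words by word accesses and compare them field-wise against the preloaded pattern. Following the description in the excerpt, this is done by subtracting the two wide words and testing the result against $\vec{0}$: since the $f$-bit fields partition the bits of a wide word, the two packed words are equal as integers if and only if every field (character) agrees, so a single zero-test detects a full match with no per-field test bit and no concern about borrows crossing field boundaries. For $m\le\lfloor w^2/f\rfloor$ the whole window is one wide word and each comparison is $O(1)$; for longer patterns I would compare chunk by chunk, processing chunks from the end of the pattern so that a mismatch in any chunk aborts the remaining comparisons and preserves \bmh's early termination.

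The equivalence to sequential \bmh hinges on one observation I would make explicit: \bmh's shift distance is governed solely by the text character aligned with $P[m]$ (the bad-character rule) and is \emph{independent} of where inside the window a mismatch is discovered. Consequently, replacing the backward scan with a bulk field-wise comparison does not alter the next window position, and the whole sequence of window positions visited is identical to that of the sequential algorithm. The time bounds then follow by counting: in the worst case there are $O(n)$ window positions, each costing $O(\lceil mf/w^2\rceil)=O(m\log\sigma/w^2+1)$ wide operations, and multiplying yields the stated worst-case running time; on average, since the visited window positions coincide with those of sequential \bmh, the classical analysis (showing the window advances by more than one position per step) carries over unchanged, so the total comparison cost remains $O(n)$.

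I expect the main obstacle to be justifying the comparison primitive rigorously rather than the high-level structure. Specifically, I must ensure that ``subtract and test for zero'' reports equality exactly when the compared characters match: this requires that the high-order fields beyond the $m$ relevant characters be zeroed identically in both operands (so that a spuriously differing padding field can neither mask nor fabricate a match), and that a partial final chunk, when $mf$ is not a multiple of $w^2$, obeys the same padding discipline. The remaining care points---preserving early termination in the multi-chunk case by scanning chunks in pattern order, and charging the $O(n+m+\sigma)$ preprocessing into the field layout and bad-character table---are routine once the comparison primitive is established.
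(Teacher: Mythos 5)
Your proposal matches the paper's proof essentially step for step: the same $f=\lceil\log\sigma\rceil$-bit field packing, the same subtract-and-test-for-zero window comparison (the paper likewise notes that no test bits are needed since only whole-word equality matters), chunked comparisons from the end of the pattern for $m\log\sigma>w^2$, and the same key observation that Horspool's shift depends only on the text character aligned with $P[m]$, so the visited window positions and hence the $O(n)$ average-case analysis are unchanged. Your padding concern for a partial final chunk is handled in the paper by simply assuming $w$ divides $m\log\sigma$, and your worst-case count $O(nm\log\sigma/w^2+n)$ is if anything slightly more careful than the paper's stated $O(mn\log\sigma/w^2+1)$.
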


\begin{FULL}
\begin{algorithm}[t!]
\caption[UW-RAM BMH]{\label{alg:bmh}BMH($T,P,n=|T|,m=|P|,\Sigma$). 
For simplicity,
  we assume that $w$ divides $m\log\sigma$. We assume also that $T$
  and $P$ are represented with $\log\sigma$ bits per symbol. We still
  use $T[i]$ to denote one character, which can be easily obtained
  from the packed representation in constant time (the same applies to
  the actual address of starting characters of substrings).}
\begin{algorithmic}[1]
			\STATE \COMMENT{Preprocessing}
			\FOR{each $\sigma \in \Sigma$}
					\STATE $\mathrm{jump}[\sigma] \gets m$
			\ENDFOR
			\FOR{$j=1$ to $m-1$}
					\STATE $\mathrm{jump}[P[j]] \gets m-j$
			\ENDFOR
			\STATE $m'\gets w^2/\log\sigma$ \COMMENT{characters per wide word}
			\STATE \COMMENT{Search}
			\STATE $i=0$
			\WHILE{$i\le n-m$}
			\STATE $k\gets m'/m$ \COMMENT{number of window segment}
			\WHILE{$k>0$}
				\STATE $W \gets T[i+(k-1)m'+1..i+km']$ \COMMENT{$W$ contains the substring of $T$ of $k$-th window segment}
				\STATE $V \gets P[(k-1)m'+1..km']$ \COMMENT{$V$ contains the substring of $P$ of $k$-th window segment}
			\IF{$W-V \ne 0$}
				\STATE break
			\ELSIF{$k = 1$}
				\STATE report occurrence at $i+1$
			\ENDIF
			\STATE $k\gets k-1$
			\ENDWHILE
			\STATE $i \gets i+\mathrm{jump}[T[i+m]]$
			\ENDWHILE
\end{algorithmic}
\end{algorithm}


\end{FULL}

\section{Conclusions}
\label{sec:conclusions}

We introduced the Ultra-Wide Word architecture and model and
showed that several classes of algorithms can be readily implemented
in this model to achieve a speedup of $\Omega(\log n)$ over traditional word-\ram~algorithms.  The examples we describe 
already show the
potential of this model to enable parallel implementations of existing
algorithms with speedups comparable to those of multi-core
computations.  
We believe that this architecture could also serve to
simplify many existing word-\ram~algorithms that in practice do not
perform well due to large constant factors. We conjecture as well
that this model will lead to new efficient algorithms and data
structures that can sidestep existing lower bounds.\looseness=-1

\small

\bibliographystyle{splncsnat}

\bibliography{succinct_short,parallel_short,string_search_short}
\normalsize

\begin{FULL}
\appendix

\section*{Appendix}

\section{UW-RAM Subroutines}
\label{app:subroutines}

\paragraph{\textbf{Comparators}}
Many word-\ram~algorithms perform operations on pairs of elements in
parallel by packing these elements in \emph{fields} within one
word. It is useful to be able to do fieldwise comparisons between two
words. Suppose that a word (either regular or wide)
is divided in $f$-bit
fields, with each field representing an $(f-1)$-bit number. Let $G$
and $F$ be two such words and let $F_i$ and $G_i$ denote the contents
of the $i$-th field in $F$ and $G$, respectively. Let us assume that we want
to identify all $F_i$ such that $F_i\ge G_i$. Fieldwise comparisons can be done by setting the most significant bit
of each field in $F$ as a test bit and computing $H=F-G$. The most
significant bit of the $i$-th field in $H$ will be 1 if and only if
$F_i\ge G_i$~\cite{hagerup98}. Now, if we want to operate only on
the values of $F$ that are greater than or equal to their
corresponding values in $G$, we can mask away the rest of the values as follows. We first mask away all but the test bits
in $H$. Then, a mask $M$ with ones in all bits of the relevant fields and
zeros everywhere else (including test bits) can be obtained by
computing $M=H-(H<<(f-1))$. The result of $(M \ \& \ F)$ contains then
only the values of fields that pass the test~\cite{hagerup98}. Clearly
this operation takes constant time, and it can be easily adapted to
other standard comparisons. We shall
assume that direct comparisons as well as operations that build on
these (such as taking the fieldwise maximum between two words) are
available and take constant time~\cite{hagerup98}.

\end{FULL}

\end{document}